\theoremstyle{plain}
\newtheorem{theorem}{Theorem}
\newtheorem{lemma}[theorem]{Lemma} 
\newtheorem{proposition}[theorem]{Proposition}
\theoremstyle{definition}
\newtheorem{definition}{Definition}
\theoremstyle{remark}
\newtheorem{remark}{Remark}
\title[Self-adjoint extensions  and  unitary operators on the boundary]{Self-adjoint extensions \\ and  unitary operators on the boundary}
\author[P. Facchi]{Paolo Facchi}
\address[P. Facchi]{Dipartimento di Fisica and MECENAS, Universit\`a di Bari
\and INFN Sezione di Bari, I-70126 Bari, Italy}
\email{paolo.facchi@ba.infn.it} 
\author[G. Garnero]{Giancarlo Garnero} 
\address[G. Garnero]{Dipartimento di Fisica and MECENAS, Universit\`a di Bari
\and INFN Sezione di Bari, I-70126 Bari, Italy}
\email{giancarlo.garnero@ba.infn.it} 
\author[M. Ligab\`o]{Marilena Ligab\`o}
\address[M. Ligab\`o]{Dipartimento di Matematica, Universit\`a di Bari, I-70125  Bari, Italy} 
\email{marilena.ligabo@uniba.it}
\subjclass[2010]{81Q10; 
35J25; 
47A07} 
\keywords{Quantum boundary conditions; Self-adjoint extensions; Quadratic forms}
\date{\today}
\begin{document}

\begin{abstract}
We establish a bijection between the self-adjoint extensions of the Laplace operator on a bounded regular domain and the unitary operators on the boundary. Each unitary encodes a specific relation between the boundary value of the function and its normal derivative.  
This bijection sets up a characterization of all physically admissible  dynamics of a nonrelativistic quantum particle confined in a cavity.
Moreover, this correspondence is discussed also at the level of quadratic forms. Finally, the connection between this  parametrization of the extensions and the classical one, in terms of boundary self-adjoint operators on closed subspaces, is shown. 

\end{abstract}

\maketitle

\section{Introduction}

In the last few years there has been an increasing interest in the physics of quantum systems confined in a bounded spatial region and in the prominent role of quantum boundary conditions. It has been realized that the presence of boundaries can often catalyze and amplify the genuine quantum behavior of a system. Several  examples are worth mentioning, ranging from the Aharonov-Bohm effect in quantum mechanics, to the quantum Hall effect in solid state physics, from  anomalies and the Casimir effect in quantum field theory, to fluctuating topologies in quantum gravity. For a review, see e.g.~\cite{marmo,reviewbal,bangalectures}.

All physical dynamics of closed quantum systems are implemented by strongly continuous one-parameter unitary groups, which by Stone's theorem are in  one to one correspondence with their generators that must be self-adjoint operators. See e.g.~\cite{RS}. Physically, the generator is the system's Hamiltonian and corresponds to the energy observable. 

For example, the Hamiltonian of a free nonrelativistic particle in $\mathbb{R}^n$ is just its kinetic energy and is given by 
$$T=\frac{p^2}{2m}= -\frac{\hbar^2 }{2m} \Delta,$$ 
where $\hbar$ is Planck's constant and $m$ is the mass of the particle (In the following, for simplicity, we will set $\hbar^2/ 2m=1$). $T$ is an operator on the Hilbert space $L^2(\mathbb{R}^n)$ with domain, e.g., $D(T)= C^\infty_c(\mathbb{R}^n)$, the smooth functions of compact support. On $D(T)$ the Laplacian is  symmetric but is not self-adjoint, whence it does not represent any physical observable, and cannot generate any physical dynamics. However, its closure $\overline{T}$, whose domain is the second Sobolev space, \emph{is} a self-adjoint operator (i.e. $T$ is essentially self-adjoint), and as such it is \emph{the} Hamiltonian of a free particle~\cite{RS}. 

The situation drastically changes  in the presence of boundaries. The kinetic energy operator of a particle in an open bounded  set $\Omega$, defined as the Laplace operator on $L^2(\Omega)$ with domain $D(T)= C^\infty_c(\Omega)$, is still symmetric but it is no more essentially self-adjoint, and as such its closure does not correspond to any physical observable. In fact, $T$ admits infinite self-adjoint extensions --- that is infinite possible dynamics --- each one characterized by a given physical behavior of the particle at the boundary $\partial\Omega$.

This paradigmatic example explains the compelling reason, since the inception of quantum mechanics, for searching and characterizing all the self-adjoint extensions (if any) of a given symmetric operator, which is formally associated to a system on physical grounds. And the answer was soon given by von Neumann in his beautiful theory of self-adjoint extensions, which is one of the gems of functional analysis~\cite{vN}. This theory is fully general and completely solve the problem of self-adjoint extensions of every densely defined and closed symmetric operator in an abstract Hilbert space in terms of unitary operators between its deficiency subspaces. See e.g.~\cite{deoliv}. 

In order to reach its goal of encompassing all possible operators, von Neumann's theory should necessarily work at an abstract level. However, for specific classes of operators it would be desirable to have a more concrete characterization of the set of self-adjoint extensions. In particular, for differential operators on a bounded spatial region, as in the above example of the free particle, one would like to establish a direct connection between self-adjoint extensions and boundary conditions. This is highly appealing from a physical perspective, since it would allow  to implement a specific dynamics by building the confining wall  out of a suitable material.

A concrete characterization was given by Grubb~\cite{gg2} for symmetric even-order elliptic differential operators in a  bounded regular spatial domain.
Building on the earlier work of Vi\u{s}ik~\cite{Vishik}, Birman~\cite{bir}, and Lions and Magenes~\cite{lm},  she was able to characterize  all the self-adjoint extensions 
in terms of boundary conditions  parametrized by (unbounded) self-adjoint boundary operators $L:D(L) \subset \mathcal{X} \to \mathcal{X}^\ast$ acting on closed (proper) subspaces~$\mathcal{X}$ of the boundary Hilbert space. See Theorem~\ref{thm:GrubbM} for the Laplace operator.

At an intermediate level of abstraction between Grubb's and von Neumann's descriptions lies the theory of boundary triples~\cite{bgp,deoliv}, which elaborates on ideas of Calkin~\cite{cal} and Vishik~\cite{Vishik}, and is valid for every symmetric operators, because it relies on an abstraction of the notion of boundary values in function spaces. A related description was discovered in the last years by Posilicano~\cite{p}, who introduced  a parametrization in terms of pairs $(\Pi, \Theta)$, where $\Pi$ is an orthogonal projection in an auxiliary Hilbert space $\mathfrak{h}$ and $\Theta$ is a self-adjoint operator in the range of~$\Pi$. See also~\cite{da}. When particularized to differential operators, one recovers Grubb's parametrization, where $\mathfrak{h}$ is essentially the boundary space, $\Pi$ is  the projection onto $\mathcal{X}$ and $\Theta$ is $L$.

Recently,  Asorey, Marmo and Ibort~\cite{marmo,aim2}  proposed on physical ground a different parametrization of the self-adjoint extensions of differential operators in terms of \emph{unitary} operators $U$ on the boundary.  This description relies more directly on physical intuition and in the last years it has been applied to several systems ranging from one dimensional quantum systems with changing boundary conditions~\cite{trotter} or with moving boundaries~\cite{moving walls, phasebox}, to the Aharonov-Bohm effect~\cite{deolivAB}, to field theories~\cite{cas3}, and in particular to the investigation of vacuum fluctuations and the Casimir effect~\cite{cas,cas2}.

The  large use of this description in several physical applications is  also due to its great manageability: the parametrization is in terms of a single unitary operator $U$ on the boundary, instead of a pair $(\mathcal{X}, L)$ composed of a closed subspace $\mathcal{X}$ 
and a self-adjoint operator $L$, which in general is unbounded and thus also needs a domain specification~$D(L)$. Here,  all information is encoded in a single simpler object.

This characterization 
is close in spirit to von Neumann's theory. 
However, it is different in one essential aspect: the unitaries are boundary operators, rather than bulk operators, and as such they are more directly related to  experimental implementations, as discussed above. In fact, along the same lines as~\cite{posvN}, it is possible to connect the two  pictures, and to exhibit an explicit relation between the boundary and the bulk unitaries, but the result will not be very transparent.

Our main objective is to establish 
a characterization of the self-adjoint extensions of an elliptic differential operator  in terms of unitary operators on the boundary.
In this paper we will focus on the above-mentioned paradigmatic model  of the Laplacian in a bounded regular domain, that is the model of a free nonrelativistic particle in a box. 

We will establish, in Theorem~\ref{thm:GG}, a bijection between the set of self-adjoint extensions of the Laplace operator on a bounded regular domain and the set of boundary unitary operators. Each unitary operator is characteristic of a specific boundary condition, that is a relation between the boundary value, $\bm{\gamma} \psi$, of the function $\psi$ and its normal derivative at the boundary, $\bm{\nu}\psi$, (in the sense of  traces).
Recall that the trace of a function in $L^2(\Omega)$ is in $H^{-1/2}(\partial \Omega)$, the Sobolev space of negative fractional order $-1/2$~\cite{lm}. This will be the natural arena of our boundary conditions, that is the boundary Hilbert space the unitaries $U$ will act on. 

The explicit relation, given in Remark~\ref{rem:equivAIM}, reads
$$ \bm{\mu}\psi - i \bm{\gamma} \psi = U ( \bm{\mu}\psi + i \bm{\gamma} \psi), $$
and, in fact, it links the boundary value $\bm{\gamma} \psi$ of the function $\psi$ to the \emph{regular part} $\bm{\mu}\psi$ of its normal derivative $\bm{\nu}\psi$, see Definition~\ref{def:mu}. This is consistent with a different regularity of the boundary values of the function and of its normal derivative: in general their traces belong to different Sobolev spaces, $H^{-1/2}(\partial \Omega)$ and $H^{-3/2}(\partial \Omega)$ respectively,  and cannot be compared. Interestingly enough, the irregular part of the normal derivative plays no role in the boundary conditions; in fact it is not an independent boundary datum, and indeed is completely determined by the trace of the function $\bm{\gamma} \psi$ through the Dirichlet-to-Neumann operator~\cite{gg1}. 

A crucial ingredient in proving that the irregular part of the normal derivative is immaterial to  the boundary conditions is the generalized Green formula, see Definition~\ref{defn:gGauss} and Proposition~\ref{prop:boundaryform}.  It exploits a gauge freedom in Green's second identity:
one can add and subtract an arbitrary boundary self-adjoint operator to the difference of the normal derivatives. This freedom can be used to get rid of the irregular part of the normal derivative and to gain regularity. In other words, the Dirichlet-to-Neumann operator is a self-adjoint operator~\cite{Arendt}.

The link between Grubb's and our parametrization, $(\mathcal{X}, L) \leftrightarrow U$, will be given in Theorem~\ref{thm:relation1-2}. In a few words, the unitary $U$ is adapted to the direct sum $H^{-1/2}(\partial \Omega)= \mathcal{X} \oplus \mathcal{X}^\perp$, and reads $U=V\oplus \mathbb{I}$. Here, the unitary component $V$ is essentially the (partial) Caley transform of $L$ and, as such, it does not have 1 as eigenvalue. Therefore, the eigenspace belonging to the eingenvalue 1 (the idle eigenspace) coincides with $\mathcal{X}^\perp$.

A final remark is in order.  In this paper, for definiteness, we  explicitly consider only the  case of the Laplace operator in a bounded regular domain of $\mathbb{R}^n$. However,
Theorem~\ref{thm:relation1-2} which  establishes the link $(\mathcal{X}, L) \leftrightarrow U$, and the general strategy of encoding boundary conditions in a unitary operator by using an idle subspace and a partial Caley transform, would allow us to generalize our results to a larger class of  operators (e.g.~Laplace-Beltrami~\cite{ibper}, Dirac~\cite{aim2}, pseudodifferential operators~\cite{gg2}) and/or settings (e.g.~manifolds with boundaries~\cite{ibortlledo},  nonregular boundaries~\cite{gesz}).

This article is organized as follows. 
In section~\ref{sec:main}, after setting the notation and defining the regularized normal derivative at the boundary, we state our main result, Theorem~\ref{thm:GG}. Then, after recalling Grubb's characterization of self-adjoint extensions, Theorem~\ref{thm:GrubbM}, we establish the connection betweeen the two parametrizations in Theorem~\ref{thm:relation1-2}. Then we state our result in terms of quadratic forms in Theorem~\ref{thm:form}, which is a corollary of Theorem~\ref{thm:GG}. Sections~\ref{sec:proofs1} and~\ref{sec:proofs2} are devoted to the proofs of the theorems. 
The main properties of the Cayley transform which are used in the proofs are gathered in the final section~\ref{sec:supplemental}.

\section{Notation and Main Results}
\label{sec:main}
We are going to consider complex separable Hilbert spaces. The inner product between two vectors $u$,$v$ of a Hilbert space $\mathscr{H}$ is denoted by $\langle u|v\rangle_{\mathscr{H}}$. In our convention it is anti-linear in the first argument and linear in the second one.

Given two Hilbert spaces $\mathscr{H}_1$ and $\mathscr{H}_2$, the set of unitary operators from $\mathscr{H}_1$ to $\mathscr{H}_2$ is denoted by $\mathscr{U}(\mathscr{H}_1,\mathscr{H}_2) $, while $\mathscr{U}(\mathscr{H}_1$) stands for $\mathscr{U}(\mathscr{H}_1,\mathscr{H}_1)$.

Let $\mathscr{H}$ be an Hilbert space and $A$ a densely defined linear operator on $\mathscr{H}$, 
$$
A : D(A) \subset \mathscr{H} \rightarrow \mathscr{H}.
$$ 
We are going to denote by $A^*$ the adjoint operator of $A$, 
$$
A^* : D(A^*) \subset \mathscr{H}\rightarrow \mathscr{H}.
$$
We say that $A$ is \emph{self-adjoint}  if $A=A^{*}$. 

Let $\Omega$ be an open bounded set in $\mathbb{R}^n$, $n \in \mathbb{N}$. Let the boundary $\partial\Omega$ of $\Omega$ be a $n-1$ dimensional infinitely differentiable manifold, such that $\Omega$ is locally on one side of $\partial\Omega$. From now on a set $\Omega$ satisfying the above conditions will be called a \emph{regular domain}~\cite{lm}. By convention  the normal  $\nu$ of $\partial\Omega$ is oriented towards the exterior of $\Omega$. 

Let $H^s(\Omega)$ (resp. $H^s(\partial\Omega)$), $s\in\mathbb{R}$ , be the Sobolev space of order $s$ on  $\Omega$ (resp. on $\partial\Omega$) with the usual norm~\cite{horm,lm}. Furthermore we set $H^{s}_0(\Omega)$ the closure of $C^\infty_c(\Omega)$ in $H^s(\Omega)$, where $C^\infty_c(\Omega)$ is the set of $C^\infty$ functions with compact support in $\Omega$.  

In what follows we will need the following family of operators $\{\Lambda_t\}_{t\in\mathbb{R}}$, where for all $t \in \mathbb{R}$ the operator $\Lambda_t$ is defined as
$$
\Lambda_t=(\mathbb{I}-\Delta_{\mathrm{LB}})^{t/2}, 
$$
where $\mathbb{I}$ is the identity operator on $L^2(\partial \Omega)$ and $\Delta_{\mathrm{LB}}$ is the Laplace-Beltrami operator on~$\partial\Omega$. We will set $\Lambda\equiv \Lambda_1$. The family $\{\Lambda_t\}_{t\in\mathbb{R}}$ has the following property:   for all $t, s\in \mathbb{R}$
\begin{equation*}
\Lambda_t:H^s(\partial \Omega) \to H^{s-t}(\partial \Omega)
\end{equation*}
is positive and unitary. For an explicit construction of $\{\Lambda_t\}_{t\in\mathbb{R}}$ see~\cite{lm}.

We denote by $\langle \cdot, \cdot\rangle_{s,-s}$, with $s\in\mathbb{R}$, the pairing between $H^{-s}(\partial\Omega)$ and its dual $H^{s}(\partial\Omega)$ induced by the scalar product in $L^2(\partial\Omega)$, i.e.
\begin{equation*}
\langle u, v \rangle_{s,-s}:=\langle \Lambda_s u | \Lambda_{-s} v \rangle_{L^2(\partial\Omega)}, \quad \text{for all } u \in H^s(\partial\Omega), v \in H^{-s}(\partial\Omega).
\end{equation*}
We denote by
\begin{equation*}
\bm{\gamma}  : L^2(\Omega)\to  H^{-1/2}(\partial \Omega), \quad \psi \mapsto \bm{\gamma}(\psi)=\psi|_{\partial \Omega}
\end{equation*}
the \emph{trace operator}, and by
\begin{equation*}
\bm{\nu} : L^2(\Omega) \to H^{-3/2}(\partial \Omega), \quad \psi\mapsto \bm{\nu}(\psi)=\frac{\partial \psi }{\partial \nu} = (\nabla \psi) |_{\partial \Omega} \cdot \nu
\end{equation*}
the \emph{normal derivative} and we recall that these operators are continuous with respect to the respective Hilbert space norms~\cite{lm}. 

In the following we will consider the Laplace operator $T=-\Delta$ on the domain
\begin{equation}
\label{eq:lapl}
D(T)=\{\psi \in H^{2}(\Omega)\,|\,\bm{\gamma}\psi=\bm{\nu}\psi=0\}\equiv H^{2}_0(\Omega),
\end{equation}
and the Dirichlet Laplacian, $T_{\mathrm{D}}=-\Delta$ on 
$$
D(T_{\mathrm{D}})=H^{2}(\Omega)\cap H^{1}_0(\Omega)=\{\psi \in H^{2}(\Omega)\,|\,\bm{\gamma}\psi=0\}.
$$
We recall that $T$ is nothing but the closure in $L^2(\Omega)$ of the symmetric operator given by the Laplacian on functions in $C^\infty_c(\Omega)$. Moreover, $T_{\mathrm{D}}$ is a self-adjoint, positive-definite operator, $T_{\mathrm{D}}=T_{\mathrm{D}}^*>0$.

Let $T^*$ be the adjoint operator of $T$. It acts as the distributional Laplacian on the maximal domain 
$$
D(T^*) =\{\psi \in L^{2}(\Omega)\,|\,\Delta\psi \in L^{2}(\Omega) \}.
$$
We have that the operator $T$ is symmetric, and $T_{\mathrm{D}}$ is a self-adjoint extension of $T$, namely,
$$
T\subset T_{\mathrm{D}} \subset T^*.
$$

Our main objective is to characterize all the possible self-adjoint extensions of the symmetric operator $T$. As the Dirichlet Laplacian, they will all be contained between the minimal Laplacian $T$ and the maximal one $T^*$. The domain of each self-adjoint extension will be characterized by a specific relation between the values of the functions and those of their normal derivatives at the boundary.

In quantum mechanics every self-adjoint extension represents the kinetic energy operator of a free non-relativistic particle (with $\hbar^2/2m=1$), constrained in the spatial domain $\Omega$ by a suitable specific wall.

We will need a regularized  version of the trace operator for the normal derivative $\bm{\nu}$.
\begin{definition}
\label{def:mu}
The  \emph{regularized normal derivative} $\bm{\mu}: D(T^*) \to  H^{-1/2}(\partial\Omega)$ is the linear operator whose action is
\begin{equation*}
\bm{\mu} \psi = \Lambda\, \bm{\nu} \Pi_{\mathrm{D}} \psi, 
\end{equation*}
for all $\psi \in D(T^*)$, where $\Pi_{\mathrm{D}} = T_{\mathrm{D}}^{-1} T^*$.
\end{definition}

\begin{remark}
\label{rem:projdir}
Note that $T_{\mathrm{D}}^{-1}$ maps $L^2(\Omega)$ onto $D(T_{\mathrm{D}})\subset H^2(\Omega)$.  By the trace theorem, $\bm{\nu}(H^2(\Omega)) = H^{1/2}(\partial\Omega)$, whence $\bm{\mu}\psi \in H^{-1/2}(\partial\Omega)$ is more regular than the normal derivative $\bm{\nu}\psi \in  H^{-3/2}(\partial\Omega)$. 

The operator $\Pi_{\mathrm{D}}$ 
is in fact a (nonorthogonal) projection from $D(T^*)$ onto $D(T_{\mathrm{D}})$, since 
for all $\psi \in D(T_{\mathrm{D}})$ one gets that  $\Pi_{\mathrm{D}} \psi = T_{\mathrm{D}}^{-1} T^*\psi = T_{\mathrm{D}}^{-1} T_{\mathrm{D}}\psi = \psi$. Thus, $\bm{\mu} \psi$ is the image under $\Lambda$ of the normal derivative of the component  $\psi_{\mathrm{D}}=\Pi_{\mathrm{D}} \psi$ of $\psi$ belonging to the regular subspace $D(T_{\mathrm{D}})$ of $D(T^*)$.
\end{remark}

\begin{theorem}\label{thm:GG}
The set of all self-adjoint extensions of T is 
\begin{equation*}
\left\{T_U: D(T_U) \to L^2(\Omega) \,|\,U \in \mathscr{U}(H^{-1/2}(\partial \Omega))\right\},
\end{equation*}
where for all $U \in \mathscr{U}(H^{-1/2}(\partial \Omega))$
\begin{equation*}
D(T_U)=\left\{\psi \in D(T^*)\,|\, i (\mathbb{I}+U)\bm{\gamma} \psi =(\mathbb{I}-U) \bm{\mu}\psi \right\}.
\end{equation*}
\end{theorem}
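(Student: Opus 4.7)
The plan is to prove Theorem~\ref{thm:GG} by combining the generalized Green formula of Proposition~\ref{prop:boundaryform} with a Cayley-type transformation of the boundary data, thereby reducing the question to a characterization of ``Lagrangian'' subspaces of an indefinite inner product space in terms of unitary operators on $H^{-1/2}(\partial\Omega)$. First I would introduce the Cayley coordinates
$$\psi_\pm := \bm{\mu}\psi \pm i\, \bm{\gamma}\psi \in H^{-1/2}(\partial\Omega), \qquad \psi\in D(T^*),$$
which are well defined precisely because Definition~\ref{def:mu} places $\bm{\mu}\psi$ and $\bm{\gamma}\psi$ in the \emph{same} Hilbert space (the whole point of the regularization discussed in Remark~\ref{rem:projdir}). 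A short algebraic manipulation, anticipated in Remark~\ref{rem:equivAIM}, shows that the boundary condition $i(\mathbb{I}+U)\bm{\gamma}\psi = (\mathbb{I}-U)\bm{\mu}\psi$ is equivalent to $\psi_- = U\psi_+$. Substituting into the generalized Green formula, the boundary form becomes
$$\langle T^*\phi|\psi\rangle_{L^2(\Omega)} - \langle\phi|T^*\psi\rangle_{L^2(\Omega)} = \frac{1}{2i}\bigl(\langle\phi_+|\psi_+\rangle_{H^{-1/2}} - \langle\phi_-|\psi_-\rangle_{H^{-1/2}}\bigr),$$
so that the indefinite form \emph{diagonalizes} in the Cayley variables. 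With this in place, symmetry of $T_U$ is immediate: unitarity of $U$ equates $\langle U\phi_+|U\psi_+\rangle_{H^{-1/2}}$ with $\langle\phi_+|\psi_+\rangle_{H^{-1/2}}$, so the two terms on the right-hand side cancel.

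The heart of the proof, and the main obstacle, is the reverse inclusion $D(T_U^*)\subseteq D(T_U)$, for which I would first establish, as a separate lemma, that the joint trace
$$(\bm{\gamma},\bm{\mu}): D(T^*) \to H^{-1/2}(\partial\Omega) \oplus H^{-1/2}(\partial\Omega)$$
is \emph{surjective}. This follows from the direct-sum decomposition $D(T^*) = D(T_{\mathrm{D}}) \oplus \ker T^*$ induced by the projector $\Pi_{\mathrm{D}} = T_{\mathrm{D}}^{-1}T^*$ of Remark~\ref{rem:projdir}: the Dirichlet summand realizes every value of $\bm{\mu}$ because $\bm{\nu}(H^2(\Omega)) = H^{1/2}(\partial\Omega)$ and $\Lambda:H^{1/2}\to H^{-1/2}$ is unitary, while the harmonic summand $\ker T^*$ realizes every value of $\bm{\gamma}$ by unique solvability of the Dirichlet problem with $H^{-1/2}$ boundary data. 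As an immediate consequence, the map $D(T_U)\ni\psi\mapsto\psi_+\in H^{-1/2}(\partial\Omega)$ is itself surjective: given $\chi\in H^{-1/2}(\partial\Omega)$ one chooses $\psi\in D(T^*)$ with $\bm{\gamma}\psi = (\chi-U\chi)/(2i)$ and $\bm{\mu}\psi = (\chi+U\chi)/2$, whence $\psi_+=\chi$ and $\psi_-=U\chi$, so $\psi\in D(T_U)$.

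The remainder closes quickly. If $\phi\in D(T_U^*)$, the vanishing of the boundary form against every $\psi\in D(T_U)$, together with the substitution $\psi_-=U\psi_+$, reduces to $\langle\phi_+ - U^*\phi_-\,|\,\psi_+\rangle_{H^{-1/2}} = 0$ for every admissible $\psi_+$; surjectivity then forces $\phi_- = U\phi_+$, i.e.\ $\phi\in D(T_U)$. For the converse, given any self-adjoint extension $S$ of $T$, I would consider the subspace $\mathcal{G}(S) := \{(\psi_+,\psi_-) : \psi\in D(S)\} \subset H^{-1/2}\oplus H^{-1/2}$; since $H_0^2(\Omega)=D(T)\subset D(S)$ contributes only the zero element, symmetry of $S$ forces the diagonal indefinite form to vanish on $\mathcal{G}(S)$, which together with the surjectivity established above identifies $\mathcal{G}(S)$ as the graph of a linear isometry $U:H^{-1/2}(\partial\Omega)\to H^{-1/2}(\partial\Omega)$. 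Maximality of the self-adjoint extension then forces $U$ to be unitary (here the properties of partial Cayley transforms gathered in Section~\ref{sec:supplemental} intervene cleanly), and uniqueness of $U$ is immediate from the surjectivity of $\psi\mapsto\psi_+$.
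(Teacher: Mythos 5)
Your proposal is correct and follows essentially the same route as the paper: the generalized Green formula (Proposition~\ref{prop:boundaryform}), the surjectivity of the joint trace $(\bm{\gamma},\bm{\mu})$ obtained from the decomposition $D(T^*)=D(T_{\mathrm{D}})+N(T^*)$ (Lemma~\ref{lemma:regular}), and the identification of self-adjoint restrictions of $T^*$ with maximally isotropic boundary subspaces parametrized by unitaries --- the only difference being that you prove this last parametrization explicitly in the Cayley coordinates $\psi_{\pm}$ rather than quoting Theorem~\ref{thm:bru} from~\cite{bgp}. (One cosmetic point: with the paper's sign conventions the diagonalized boundary form reads $\langle T^*\phi|\psi\rangle_{L^2(\Omega)}-\langle\phi|T^*\psi\rangle_{L^2(\Omega)}=\tfrac{1}{2i}\bigl(\langle\phi_-|\psi_-\rangle_{H^{-1/2}(\partial\Omega)}-\langle\phi_+|\psi_+\rangle_{H^{-1/2}(\partial\Omega)}\bigr)$, the opposite of what you wrote, but this does not affect the argument.)
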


\begin{remark}
\label{rem:equivAIM}
Note the role played by the regularized normal derivative $\bm{\mu} \psi$ in the above theorem:  the trace $\bm{\gamma}\psi$ and $\bm{\mu} \psi$ can be compared because they both belong  to the same (boundary) space, namely $H^{-1/2}(\partial \Omega)$. Notice also the equivalent relation
\begin{equation*}
 \bm{\mu}\psi - i \bm{\gamma} \psi = U ( \bm{\mu}\psi + i \bm{\gamma} \psi)
 \end{equation*}
defining the domain of the self adjoint extension $T_U$. 
\end{remark}

Now we want to compare the result in Theorem~\ref{thm:GG} with the classical characterization of the self-adjoint extensions of $T$ due to Grubb~\cite{gg2,gg1}. 
We need some notation: a closed linear subspace $\mathcal{X}$ of $H^{-1/2}(\partial \Omega)$ is denoted by $ \mathcal{X} \sqsubset H^{-1/2}(\partial \Omega)$, and $\mathcal{X}^\ast$ denotes its dual;
we say that a densely defined operator $L:D(L) \subset \mathcal{X} \to \mathcal{X}^\ast$ is self-adjoint  if 
$$
\Lambda L:D(L) \subset \mathcal{X} \to \mathcal{X}
$$
is self-adjoint, $(\Lambda L)^* = \Lambda L$.

\begin{theorem}[\cite{gg2}]\label{thm:GrubbM}
The set of all self-adjoint extensions of T is 
\begin{equation*}
\left\{ T_{(\mathcal{X},L)}: D\left(T_{(\mathcal{X},L)}\right) \to L^2(\Omega) \,|\, \mathcal{X} \sqsubset H^{-1/2}(\partial \Omega), \; L: D(L) \subset \mathcal{X} \to \mathcal{X}^{\ast},  L \text{ self-adjoint} \right\},
\end{equation*}
where, for all $\mathcal{X} \sqsubset H^{-1/2}(\partial \Omega)$ and $L: D(L) \subset \mathcal{X} \to \mathcal{X}^{\ast}$, $L$ self-adjoint, 
\begin{equation*}
D\left(T_{(\mathcal{X},L)}\right) =\{\psi\in D(T^*)\,|\,\bm{\gamma}\psi\in D(L), \, 
\langle \bm{\nu} \Pi_{\mathrm{D}} \psi, u \rangle_{\frac{1}{2},-\frac{1}{2}} 
= \langle L\bm{\gamma}\psi, u \rangle_{\frac{1}{2},-\frac{1}{2}}, \, \forall u \in \mathcal{X} \}.
\end{equation*}
\end{theorem}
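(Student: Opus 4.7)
The plan is to treat $(\bm{\gamma},\bm{\nu}\Pi_{\mathrm{D}})$ as a boundary triple for $T^*$ and to classify its self-adjoint extensions by boundary operators. First, I would establish a generalized Green identity: for $\psi,\phi \in D(T^*)$,
\begin{equation*}
\langle T^*\psi\,|\,\phi\rangle_{L^2(\Omega)} - \langle \psi\,|\,T^*\phi\rangle_{L^2(\Omega)} = \langle \bm{\nu}\Pi_{\mathrm{D}}\psi,\bm{\gamma}\phi\rangle_{\frac{1}{2},-\frac{1}{2}} - \overline{\langle \bm{\nu}\Pi_{\mathrm{D}}\phi,\bm{\gamma}\psi\rangle_{\frac{1}{2},-\frac{1}{2}}}.
\end{equation*}
This follows from the classical Green formula applied to the regular components $\Pi_{\mathrm{D}}\psi,\Pi_{\mathrm{D}}\phi \in D(T_{\mathrm{D}})$ together with the self-adjointness of the Dirichlet-to-Neumann operator governing the harmonic residues $(\mathbb{I}-\Pi_{\mathrm{D}})\psi \in \ker T^*$. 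The underlying algebraic decomposition $D(T^*) = D(T_{\mathrm{D}}) \dotplus \ker T^*$, on which $\bm{\gamma}$ restricts to a bijection $\ker T^* \to H^{-1/2}(\partial\Omega)$, is the structural backbone of the classification.

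For sufficiency, I fix a pair $(\mathcal{X},L)$ with $\Lambda L$ self-adjoint on $\mathcal{X}$ and verify that $T_{(\mathcal{X},L)}$ is self-adjoint. Symmetry is immediate from Green's identity and the symmetry of $\Lambda L$. Promoting this to self-adjointness, I take $\phi \in D(T_{(\mathcal{X},L)}^*)$ and apply Green's identity against every $\psi \in D(T_{(\mathcal{X},L)})$; varying $\bm{\gamma}\psi$ over the dense subspace $D(L) \subset \mathcal{X}$ the adjoint condition transforms into a pairing equation on the boundary, which forces $\bm{\gamma}\phi \in D((\Lambda L)^*) = D(\Lambda L)$ with the matching value of $\bm{\nu}\Pi_{\mathrm{D}}\phi$ on $\mathcal{X}$, placing $\phi$ back into $D(T_{(\mathcal{X},L)})$. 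For necessity, given any self-adjoint extension $A$ of $T$, I set $D(L) := \bm{\gamma}(D(A))$, $\mathcal{X}:=\overline{D(L)}$ in $H^{-1/2}(\partial\Omega)$, and define $L\bm{\gamma}\psi$ as the restriction to $\mathcal{X}$ of $\bm{\nu}\Pi_{\mathrm{D}}\psi \in H^{1/2}(\partial\Omega)$. Well-definedness and symmetry of $L$ are direct consequences of Green's identity combined with the symmetry of $A$; the identity $(\Lambda L)^* = \Lambda L$ then follows from $A^*=A$ by lifting, through the decomposition $D(T^*) = D(T_{\mathrm{D}}) \dotplus \ker T^*$, any candidate in the domain of $(\Lambda L)^*$ to an element of $D(A^*) = D(A)$.

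The main obstacle is the careful bookkeeping across the Sobolev scales $H^{\pm 1/2}$ and $H^{\pm 3/2}$ behind the use of $\bm{\nu}\Pi_{\mathrm{D}}$ in place of $\bm{\nu}$: one must verify that no boundary information is lost under this regularization, i.e.\ that the irregular part $\bm{\nu}(\mathbb{I}-\Pi_{\mathrm{D}})\psi$ is determined by $\bm{\gamma}\psi$ through the Dirichlet-to-Neumann operator, and that the duality pairing between the closed subspace $\mathcal{X} \sqsubset H^{-1/2}(\partial\Omega)$ and its dual $\mathcal{X}^*$ is compatible with the ambient $\langle\cdot,\cdot\rangle_{\frac{1}{2},-\frac{1}{2}}$ pairing in precisely the way that makes the maximality of $L$ correspond to the self-adjointness of $A$. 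Once Green's identity, the self-adjointness of the Dirichlet-to-Neumann operator, and the bijection $\bm{\gamma}:\ker T^* \to H^{-1/2}(\partial\Omega)$ are in place, the remainder of the argument follows the standard boundary-triple pattern.
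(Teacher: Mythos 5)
The paper never proves this statement: it is imported from Grubb~\cite{gg2}, and within the paper it would instead be recovered as a corollary of Theorem~\ref{thm:GG} together with the Cayley-transform dictionary of Theorem~\ref{thm:relation1-2}. Your proposal proves it directly, running the boundary-triple argument with the pair $(\bm{\gamma},\bm{\nu}\Pi_{\mathrm{D}})$ and the operator $L$ itself rather than with its partial Cayley transform. The ingredients you invoke are exactly the ones the paper builds for its own route: your generalized Green identity is Proposition~\ref{prop:boundaryform}, and the decomposition $D(T^*)=D(T_{\mathrm{D}})+N(T^*)$ together with the surjectivity of the joint trace map is Lemma~\ref{lemma:regular}. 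Your route buys a self-contained proof of the classical $(\mathcal{X},L)$ classification without ever mentioning unitaries; the paper's route buys the bijection with $\mathscr{U}(H^{-1/2}(\partial\Omega))$, which is its actual goal, and gets Grubb's parametrization as a by-product. Both the sufficiency argument (symmetry from the Green identity, then an adjoint-domain computation) and the necessity argument (defining $D(L)=\bm{\gamma}(D(A))$, lifting elements of $D((\Lambda L)^*)$ back to $D(A^*)=D(A)$ through the decomposition) are sound in outline.

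One step is stated too loosely. In the sufficiency direction, varying $\bm{\gamma}\psi$ over the dense set $D(L)\subset\mathcal{X}$ alone does not force $\bm{\gamma}\phi\in\mathcal{X}$: for $\psi\in D(T_{(\mathcal{X},L)})$ the admissible boundary data are all pairs $(u,v)$ with $u\in D(L)$ and $Qv=\Lambda Lu$, where $Q$ is the orthogonal projection onto $\mathcal{X}$, so the component $(\mathbb{I}-Q)\bm{\mu}\psi$ is completely free. It is precisely by letting that free component vary (e.g.\ taking $u=0$) that one obtains $\langle\bm{\gamma}\phi|w\rangle_{H^{-1/2}(\partial\Omega)}=0$ for all $w\in\mathcal{X}^{\perp}$, hence $\bm{\gamma}\phi\in\mathcal{X}$, before the duality argument identifies $\bm{\gamma}\phi\in D((\Lambda L)^*)=D(\Lambda L)$. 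This requires the full surjectivity of $(\bm{\gamma},\bm{\mu})$ onto $H^{-1/2}(\partial\Omega)\times H^{-1/2}(\partial\Omega)$ (assertion~\ref{thm:trasesurjective} of Lemma~\ref{lemma:regular}), not merely the bijectivity of $\bm{\gamma}$ on $N(T^*)$. With that point made explicit, your sketch goes through.
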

The relation between the two different parametrizations of the self-adjoint extensions of $T$ given in Theorem~\ref{thm:GG} and Theorem~\ref{thm:GrubbM} is established in  the next theorem. First we introduce some notation: if $U: H^{-1/2}(\partial \Omega) \to H^{-1/2}(\partial \Omega)$ is a linear operator and $\mathcal{X}$ is a subspace of $H^{-1/2}(\partial \Omega)$ we denote by $U\!\!\upharpoonright_{\mathcal{X}}$ the operator
$$
U\!\!\upharpoonright_{\mathcal{X}}: \mathcal{X} \to U(\mathcal{X}), \qquad u \in \mathcal{\mathcal{X}} \mapsto U u .
$$

\begin{theorem}\label{thm:relation1-2}
For all $\mathcal{X} \sqsubset H^{-1/2}(\partial \Omega)$ and $L: D(L) \subset \mathcal{X} \to \mathcal{X}^{\ast}$, with $L$ self-adjoint, it results that
\begin{equation*}
T_{(\mathcal{X},L)}=T_U, \,\, \textrm{ with }\; U=\mathscr{C}(\Lambda L) \oplus \mathbb{I}_{\mathcal{X}^{\perp}} \in \mathscr{U} (H^{-1/2}(\partial \Omega)),
\end{equation*}
where 
\begin{equation*}
\mathscr{C}(\Lambda L) = (\Lambda L - i \mathbb{I}_{\mathcal{X}})(\Lambda L + i \mathbb{I}_{\mathcal{X}})^{-1}
\end{equation*}
is the Cayley transform of $\Lambda L$, and $\mathbb{I}_{\mathcal{X}}, \mathbb{I}_{\mathcal{X}^{\perp}}$ are the identity operators on $\mathcal{X}$ and on $\mathcal{X}^{\perp}$, respectively.

Conversely for all $U \in \mathscr{U}(H^{-1/2}(\partial \Omega))$ it results that
\begin{equation*}
T_U=T_{(\mathcal{X},L)},  \,\, \text{with }\; \mathcal{X}=
\mathrm{Ran\,} Q_U \text{ and } L= \Lambda^{-1} \mathscr{C}^{-1}\left( U\!\!\upharpoonright_{\mathcal{X}}  \right),
\end{equation*}
where $Q_U$ 
is the spectral projection of $U$ on the Borel set $\mathbb{R}\setminus \{1\}$ and
$$
\mathscr{C}^{-1}\left( V \right)=i\left(\mathbb{I}_{\mathcal{X}} + V \right)\left(\mathbb{I}_{\mathcal{X}} - V  \right)^{-1}
$$ 
is the inverse Cayley transform of $V \in \mathscr{U}(\mathcal{X})$.
\end{theorem}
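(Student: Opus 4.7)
The bridge between the two parametrizations is the Cayley transform, and the heart of the argument is to show that, after a careful translation of dual pairings, the Grubb boundary condition is equivalent to the unitary one. My plan is to split the argument into four steps and to deal with the forward direction $(\mathcal{X},L)\mapsto U$ first; the converse then follows from the same chain of equivalences. To begin with, I would reformulate Grubb's condition as a relation in the Hilbert space $H^{-1/2}(\partial\Omega)$ itself, equipped with the inner product $\langle u|v\rangle_{H^{-1/2}}:=\langle \Lambda_{-1/2}u|\Lambda_{-1/2}v\rangle_{L^2(\partial\Omega)}$. Using $\bm{\mu}=\Lambda\bm{\nu}\Pi_{\mathrm{D}}$ together with $\Lambda_{1/2}=\Lambda_{-1/2}\Lambda$, the dual pairings in Theorem~\ref{thm:GrubbM} recast as $\langle \bm{\nu}\Pi_{\mathrm{D}}\psi,u\rangle_{1/2,-1/2}=\langle \bm{\mu}\psi|u\rangle_{H^{-1/2}}$ and $\langle L\bm{\gamma}\psi,u\rangle_{1/2,-1/2}=\langle \Lambda L\bm{\gamma}\psi|u\rangle_{H^{-1/2}}$, so Grubb's condition becomes: $\bm{\gamma}\psi\in D(L)\subset\mathcal{X}$ and $P_{\mathcal{X}}\bm{\mu}\psi=\Lambda L\bm{\gamma}\psi$, while $P_{\mathcal{X}^{\perp}}\bm{\mu}\psi$ is left free, where $P_{\mathcal{X}},P_{\mathcal{X}^{\perp}}$ are the orthogonal projections in $H^{-1/2}$.

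Next, I would decompose the unitary condition $i(\mathbb{I}+U)\bm{\gamma}\psi=(\mathbb{I}-U)\bm{\mu}\psi$ along $H^{-1/2}(\partial\Omega)=\mathcal{X}\oplus\mathcal{X}^{\perp}$, using $U=V\oplus \mathbb{I}_{\mathcal{X}^{\perp}}$. The $\mathcal{X}^{\perp}$-component collapses to $2iP_{\mathcal{X}^{\perp}}\bm{\gamma}\psi=0$, forcing $\bm{\gamma}\psi\in\mathcal{X}$, while the $\mathcal{X}$-component reads $i(\mathbb{I}_{\mathcal{X}}+V)\bm{\gamma}\psi=(\mathbb{I}_{\mathcal{X}}-V)P_{\mathcal{X}}\bm{\mu}\psi$. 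Invoking the standard identities $\mathbb{I}_{\mathcal{X}}-V=2i(\Lambda L+i\mathbb{I}_{\mathcal{X}})^{-1}$ and $\mathbb{I}_{\mathcal{X}}+V=2\Lambda L(\Lambda L+i\mathbb{I}_{\mathcal{X}})^{-1}$ from Section~\ref{sec:supplemental}, this simplifies to $\Lambda L(\Lambda L+i\mathbb{I}_{\mathcal{X}})^{-1}\bm{\gamma}\psi=(\Lambda L+i\mathbb{I}_{\mathcal{X}})^{-1}P_{\mathcal{X}}\bm{\mu}\psi$. Applying $(\Lambda L+i\mathbb{I}_{\mathcal{X}})$ to both sides and exploiting the commutativity of $\Lambda L$ with its resolvent then converts this into $\bm{\gamma}\psi\in D(L)$ and $\Lambda L\bm{\gamma}\psi=P_{\mathcal{X}}\bm{\mu}\psi$, matching the reformulated Grubb condition word for word. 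This establishes $T_{(\mathcal{X},L)}=T_U$.

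For the converse, given $U\in\mathscr{U}(H^{-1/2}(\partial\Omega))$, the spectral theorem for unitary operators identifies $\mathcal{X}^{\perp}=\ker(\mathbb{I}-U)$ and $\mathcal{X}=\mathrm{Ran\,}Q_U$, so that $U$ decomposes as $V\oplus \mathbb{I}_{\mathcal{X}^{\perp}}$ with $V\in\mathscr{U}(\mathcal{X})$ and $1\notin\sigma_p(V)$. The inverse Cayley transform then produces a self-adjoint operator $A:=\mathscr{C}^{-1}(V)$ on $\mathcal{X}$, and setting $L=\Lambda^{-1}A$ makes $\Lambda L=A$ self-adjoint on $\mathcal{X}$, i.e., valid Grubb data. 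Feeding $(\mathcal{X},L)$ back into the forward direction closes the bijection.

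The main technical obstacle will be the domain bookkeeping inside the third step: since $\Lambda L$ is generically unbounded on $\mathcal{X}$, to pass from $\Lambda L(\Lambda L+i\mathbb{I}_{\mathcal{X}})^{-1}\bm{\gamma}\psi=(\Lambda L+i\mathbb{I}_{\mathcal{X}})^{-1}h$ to $\bm{\gamma}\psi\in D(\Lambda L)$ and $\Lambda L\bm{\gamma}\psi=h$, one has to observe that the common value $(\Lambda L+i\mathbb{I}_{\mathcal{X}})^{-1}h$ itself lies in $D(\Lambda L)$, equivalently, that $(\Lambda L+i\mathbb{I}_{\mathcal{X}})^{-1}\bm{\gamma}\psi$ lies in $D(\Lambda L^{2})$, so that $(\Lambda L+i\mathbb{I}_{\mathcal{X}})$ can be legitimately applied on both sides and the commutation $(\Lambda L+i\mathbb{I}_{\mathcal{X}})\Lambda L=\Lambda L(\Lambda L+i\mathbb{I}_{\mathcal{X}})$ used there. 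Once this is secured, the equivalence reduces to the algebraic manipulations above.
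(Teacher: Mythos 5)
Your proposal is correct and follows essentially the same route as the paper: recast Grubb's dual-pairing condition as $P_{\mathcal{X}}\bm{\mu}\psi=\Lambda L\bm{\gamma}\psi$ in the Hilbert space $H^{-1/2}(\partial\Omega)$, split the unitary condition along $\mathcal{X}\oplus\mathcal{X}^{\perp}$, and match the two via the Cayley-transform identities. The domain bookkeeping you flag as the main obstacle is exactly the content of the paper's Lemma~\ref{thm1} (the graph identity $\mathcal{G}(A)=\Theta(A)$), and your resolution --- observing that the common value lies in $D(\Lambda L)$ so that $(\Lambda L+i\mathbb{I}_{\mathcal{X}})$ may be applied and commuted through --- is precisely the paper's argument.
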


\begin{remark}
The Cayley transform maps bijectively self-adjoint operators on the Hilbert space $\mathcal{X}$ to unitary operators that do not have 1 as eigenvalue.
See Section~\ref{sec:supplemental}.  In the second part of the theorem, $V= U\!\!\upharpoonright_{\mathcal{X}}$ is the restriction of the unitary $U$ to its spectral subspace  $\mathcal{X}=\mathrm{Ran\,} Q_U$ orthogonal to the (possible) eigenspace belonging to the eigenvalue~1. Therefore its inverse Cayley transform exists. It is a bounded self-adjoint operator if $1$ is a point of the resolvent set of $V$, i.e.\ if the (possible) eigenvalue 1 of $U$ is isolated; otherwise it is an unbounded self-adjoint operator.
\end{remark}
\section{Quadratic forms}

Consider the expectation value of the symmetric operator $T= -\Delta$ at $\psi\in D(T)= H^{2}_0(\Omega)$:
\begin{equation}
\label{eq:tmin}
\mathfrak{t}(\psi) = \langle \psi |T \psi \rangle_{L^2(\Omega)} = \|\nabla  \psi \|^2_{L^2(\Omega)}.
\end{equation}
Physically this represents the kinetic energy of a quantum particle in the vector state $\psi$ (assumed to be normalized). According to the postulates of quantum mechanics, a quadratic form corresponds to a physical observable---and hence to a self-adjoint operator---if and only if
it is real and closed~\cite{RS}. Therefore, the search of the self-adjoint extensions of the symmetric operator $T$ is mirrored in the search of the real and closed quadratic forms that extend the minimal form~(\ref{eq:tmin}). 

As a consequence, Theorem~\ref{thm:GG}  has a counterpart in terms of kinetic energy forms, through the relation
$\mathfrak{t}_U(\psi) = \langle \psi | T_U  \psi \rangle$,
which must hold for all $\psi\in D(T_U)$.

\begin{theorem}\label{thm:form}
The set of all real closed quadratic forms on $L^2(\Omega)$ that extend $\mathfrak{t}(\psi)$ is
\begin{equation*}
\left\{\mathfrak{t}_U: D(\mathfrak{t}_U) \to \mathbb{R} \,|\,U \in \mathscr{U}(H^{-1/2}(\partial \Omega))\right\},
\end{equation*}
with 
$$\mathfrak{t}_U(\psi)= \|\nabla  \psi_{\mathrm{D}}\|^2_{L^2(\Omega)} + \langle \bm{\gamma}\psi | K_U \bm{\gamma} \psi\rangle_{H^{-1/2}(\partial\Omega)}
,\qquad \text{for all } \psi\in D_U,
$$
where
$$D_U = D(\mathfrak{t}_{\mathrm{D}}) + N(T^*)  \cap \bm{\gamma}^{-1} \left(D(K_U)\right)$$ 
is a core of $\mathfrak{t}_U$.

Here $\psi_{\mathrm{D}} = \Pi_{\mathrm{D}}\psi\in D(\mathfrak{t}_{\mathrm{D}})=H^1_0(\Omega)$, the domain of the Dirichlet form, and
$K_U$ is a self-adjoint operator on the boundary space $H^{-1/2}(\partial\Omega)$ defined by
$$D(K_U)=\mathrm{Ran}(\mathbb{I}-U), \qquad K_U (\mathbb{I}-U)g = - i Q_U  (\mathbb{I}+U)g, \qquad \text{for all } g\in H^{-1/2}(\partial\Omega),$$
with $Q_U$ the  projection onto the subspace $\overline{\mathrm{Ran}(\mathbb{I}-U)}$.

Moreover, the domain $D(T_U)$ of Theorem~\ref{thm:GG} is a core of $\mathfrak{t}_U$ (in fact it is a subspace of $D_U$), and 
$$\mathfrak{t}_U(\psi) = \langle \psi | T_U  \psi \rangle_{L^2(\Omega)} \qquad \text{for all } \psi\in D(T_U).$$

\end{theorem}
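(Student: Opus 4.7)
The plan is to verify the representation formula directly on $D(T_U)$, extend it by closure to $D_U$, and invoke the bijection of Theorem~\ref{thm:GG} together with the standard form--operator correspondence. The two main tools are the generalized Green formula (Proposition~\ref{prop:boundaryform}) and the Cayley transform properties collected in Section~\ref{sec:supplemental}.

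For $\psi\in D(T_U)$, I would decompose $\psi=\psi_{\mathrm{D}}+\psi_0$ with $\psi_{\mathrm{D}}=\Pi_{\mathrm{D}}\psi\in D(T_{\mathrm{D}})$ and $\psi_0=\psi-\psi_{\mathrm{D}}\in N(T^*)$. Since $T_U\psi=T^*\psi=T_{\mathrm{D}}\psi_{\mathrm{D}}$, one has
\[
\langle\psi|T_U\psi\rangle_{L^2(\Omega)}=\|\nabla\psi_{\mathrm{D}}\|_{L^2(\Omega)}^{2}+\langle\psi_0|T_{\mathrm{D}}\psi_{\mathrm{D}}\rangle_{L^2(\Omega)},
\]
and the generalized Green formula converts the crossed term into a boundary pairing between $\bm{\gamma}\psi=\bm{\gamma}\psi_0$ and $\bm{\mu}\psi=\Lambda\bm{\nu}\psi_{\mathrm{D}}$ on $H^{-1/2}(\partial\Omega)$, with the weight $\Lambda$ absorbed in the definition of $\bm{\mu}$. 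Introducing $g=\bm{\mu}\psi+i\bm{\gamma}\psi$, the equivalent boundary condition of Remark~\ref{rem:equivAIM} gives $\bm{\mu}\psi-i\bm{\gamma}\psi=Ug$, so that
\[
\bm{\mu}\psi=\tfrac{1}{2}(\mathbb{I}+U)g,\qquad \bm{\gamma}\psi=\tfrac{1}{2i}(\mathbb{I}-U)g.
\]
In particular $\bm{\gamma}\psi\in\mathrm{Ran}(\mathbb{I}-U)=D(K_U)$, the defining relation of $K_U$ yields $K_U\bm{\gamma}\psi=-Q_U\bm{\mu}\psi$, and substitution collapses the boundary term into $\langle\bm{\gamma}\psi|K_U\bm{\gamma}\psi\rangle_{H^{-1/2}(\partial\Omega)}$. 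Its reality is automatic since $\langle(\mathbb{I}-U)g|(\mathbb{I}+U)g\rangle$ is purely imaginary for a unitary $U$.

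Next I would verify that the prescription on $D_U$ really defines a real closed quadratic form. Matching the definition of $K_U$ with the inverse Cayley transform of Theorem~\ref{thm:relation1-2}, $K_U$ is precisely the inverse partial Cayley transform of the unitary $U\!\!\upharpoonright_{\mathrm{Ran}\,Q_U}$, and hence self-adjoint on $H^{-1/2}(\partial\Omega)$ by the material of Section~\ref{sec:supplemental}. The boundary functional $\psi\mapsto\langle\bm{\gamma}\psi|K_U\bm{\gamma}\psi\rangle_{H^{-1/2}(\partial\Omega)}$, pulled back to $L^2(\Omega)$ through the continuous trace $\bm{\gamma}$, is therefore closed on its natural form domain; added to the closed Dirichlet form $\|\nabla\psi_{\mathrm{D}}\|_{L^2(\Omega)}^{2}$ on $H^1_0(\Omega)$ it produces a real closed form extending $\mathfrak{t}$, of which $D_U$ is a core by construction.

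Finally, $D(T_U)\subset D_U$ since the boundary condition of Theorem~\ref{thm:GG} forces $\bm{\gamma}\psi\in D(K_U)$ while $\psi_{\mathrm{D}}\in D(T_{\mathrm{D}})\subset D(\mathfrak{t}_{\mathrm{D}})$. Being an operator core of a self-adjoint operator, $D(T_U)$ is automatically a form core of the closed form associated with $T_U$, so uniqueness of the form--operator correspondence identifies $\mathfrak{t}_U$ as that form, and the bijection $U\leftrightarrow\mathfrak{t}_U$ is inherited from $U\leftrightarrow T_U$ in Theorem~\ref{thm:GG}. The main obstacle I expect is in the middle step: $K_U$ can be unbounded and is defined on the possibly non-closed range $\mathrm{Ran}(\mathbb{I}-U)$, so closability of the boundary form cannot be obtained by direct estimates and must be read off from the Cayley transform identification. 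A secondary subtlety is that the low regularity $\bm{\nu}\psi\in H^{-3/2}(\partial\Omega)$ forces the use of the generalized Green formula in place of the classical one, which is precisely why the regularized normal derivative $\bm{\mu}$ was introduced in the first place.
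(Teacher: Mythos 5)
Your proposal follows essentially the same route as the paper: decompose $\psi=\psi_{\mathrm{D}}+\psi_0$, use the Gauss--Green formula to get $\langle\psi|T^*\psi\rangle_{L^2(\Omega)}=\|\nabla\psi_{\mathrm{D}}\|^2_{L^2(\Omega)}-\langle\bm{\gamma}\psi|\bm{\mu}\psi\rangle_{H^{-1/2}(\partial\Omega)}$, then use the boundary condition to derive $K_U\bm{\gamma}\psi=-Q_U\bm{\mu}\psi$ and rewrite the boundary term, concluding the core property from $D(T_U)\subset D_U$. Your substitution $g=\bm{\mu}\psi+i\bm{\gamma}\psi$ is merely a cleaner packaging of the paper's $P_U$/$Q_U$ manipulation, and your candid caveat about closability of the boundary form matches the level of detail the paper itself supplies on that point.
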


\begin{proof}
According to assertion~\ref{thm:directsum} of Lemma~\ref{lemma:regular}, every $\phi\in C^{\infty}(\overline{\Omega})\subset D(T^*)$ has a unique decomposition $\phi = \phi_{\mathrm{D}} + \phi_0$, with $\bm{\gamma}\phi_{\mathrm{D}} = 0$ and $\Delta \phi_0=0$. Thus, for any $\phi\in C^{\infty}(\overline{\Omega})$, we get by the Gauss-Green formula and Definition~\ref{def:mu}
\begin{eqnarray}
\langle \phi|T^{*}\phi\rangle_{L^2(\Omega)}   &=& - \int_\Omega  \bar{\phi} \Delta \phi_{\mathrm{D}} \mathrm{d} x \nonumber \\
& & =  \int_\Omega \nabla \bar{\phi}_0 \cdot \nabla \phi_{\mathrm{D}} \mathrm{d} x 
+\int_\Omega |\nabla \phi_{\mathrm{D}}|^2 \mathrm{d} x 
- \int_{\partial\Omega}  \bar{\phi}  \frac{\partial \phi_{\mathrm{D}}}{\partial \nu} \mathrm{d} S \nonumber \\
& & = \|\nabla  \phi_{\mathrm{D}}\|^2_{L^2(\Omega)} - \langle\bm{\gamma}\phi|  \bm{\mu} \phi\rangle_{H^{-1/2}(\partial\Omega)},
\label{eq:semiGreen}
\end{eqnarray}
since 
\begin{equation*}
\int_\Omega \nabla \bar{\phi}_0 \cdot \nabla \phi_{\mathrm{D}} \mathrm{d} x =  
- \int_\Omega  \Delta\bar{\phi}_0  \phi_{\mathrm{D}} \mathrm{d} x + \int_{\partial\Omega}    \frac{\partial \bar{\phi}_0}{\partial \nu} \phi_{\mathrm{D}} \mathrm{d} S = 0.
\end{equation*}
By density, formula~(\ref{eq:semiGreen}) is valid for all $\phi \in D(T^*)$.
Therefore, we can define the following quadratic form
\begin{equation}
\mathfrak{t}_*(\psi)= \|\nabla  \psi_{\mathrm{D}}\|^2_{L^2(\Omega)} - \langle \bm{\gamma}\psi | \bm{\mu} \psi\rangle_{H^{-1/2}(\partial\Omega)},
\label{eq:tform}
\end{equation}
which  on $D(T^*)$ coincides with the expectation value of the operator $T^*$, namely
$$\mathfrak{t}_*(\psi) = \langle \psi|T^{*}\psi\rangle_{L^2(\Omega)},$$
for all $\psi \in D(T^*)$. However, notice that $D(T_{\mathrm{D}})= H^2(\Omega)\cap H^1_0(\Omega)$ is a dense subspace of $D(\mathfrak{t}_{\mathrm{D}}) = H^1_0(\Omega)$, the domain of the Dirichlet quadratic form, 
$$\mathfrak{t}_{\mathrm{D}}(u) = \|\nabla u\|^2_{L^2(\Omega)}.$$ 
Therefore, the form~(\ref{eq:tform}) can be extended by density to functions 
$$\psi \in D(\mathfrak{t}_{\mathrm{D}}) + N(T^*).$$
[Recall the decomposition of Lemma~\ref{lemma:regular}, $D(T^*)=D(T_{\mathrm{D}})+ N(T^*)$.]

Suppose now that $\psi \in D(T_U) \subset D(T^*)$. Thus,
$$\langle \psi|T_U\psi\rangle_{L^2(\Omega)} = \mathfrak{t}_*(\psi)= \|\nabla  \psi_{\mathrm{D}}\|^2_{L^2(\Omega)} - \langle \bm{\gamma}\psi | \bm{\mu} \psi\rangle_{H^{-1/2}(\partial\Omega)},$$
and, by Theorem~\ref{thm:GG},
$$ i (\mathbb{I}+U)\bm{\gamma} \psi =(\mathbb{I}-U) \bm{\mu}\psi.$$

Let $P_U$ and $Q_U$ be the spectral projections of $U$ on the Borel sets $\{1\}$ and $\mathbb{R}\setminus\{1\}$, respectively ($P_U$ is zero if $1$ is not an eigenvalue of $U$).
Then the above relation is equivalent to
\begin{equation}
\label{eq:projcond}
P_U \bm{\gamma} \psi=0, \qquad  i (\mathbb{I}+U) Q_U\bm{\gamma} \psi =(\mathbb{I}-U) Q_U\bm{\mu}\psi,
\end{equation}
which imply that 
$$\bm{\gamma} \psi \in 
\mathrm{Ran\,}(\mathbb{I}-U) \subset \mathrm{Ran\,}Q_U,$$
since $\mathrm{Ran\,}P_U=  \mathrm{Ran\,}(\mathbb{I}-U)^\perp$.
Let us now define the operator $K_U$ with domain
$$D(K_U)=\mathrm{Ran}(\mathbb{I}-U),$$
whose action is
$$K_U (\mathbb{I}-U)g 
= - i Q_U  (\mathbb{I}+U)g =  - i  (\mathbb{I}+U) Q_U  g,$$ 
for all $g\in H^{-1/2}(\partial\Omega)$.
Thus we get that, for some $g\in H^{-1/2}(\partial\Omega)$,
\begin{eqnarray*}
i (\mathbb{I}+U) Q_U\bm{\gamma} \psi & = & i (\mathbb{I}+U) Q_U (\mathbb{I}-U)g = (\mathbb{I}-U) i Q_U (\mathbb{I}+U)g 
\nonumber\\
&=&- (\mathbb{I}-U) K_U (\mathbb{I}-U)g = - (\mathbb{I}-U) K_U Q_U\bm{\gamma} \psi,
\end{eqnarray*}
which plugged into~(\ref{eq:projcond}) gives
$$ - (\mathbb{I}-U) Q_U K_U \bm{\gamma} \psi = (\mathbb{I}-U) Q_U\bm{\mu}\psi.$$
Since $\mathbb{I}-U$ is injective when restricted to $\mathrm{Ran\,}Q_U$, we get that 
\begin{equation}
 K_U \bm{\gamma} \psi = - Q_U\bm{\mu}\psi,
\end{equation}
for all $\bm{\gamma} \psi \in D(K_U)$.
This implies that
$$- \langle \bm{\gamma}\psi | \bm{\mu} \psi\rangle_{H^{-1/2}(\partial\Omega)} = \langle \bm{\gamma}\psi | K_U \bm{\gamma} \psi\rangle_{H^{-1/2}(\partial\Omega)},$$
for all $\psi \in D(\mathfrak{t}_{\mathrm{D}}) + N(T^*)$, such that $\bm{\gamma} \psi \in D(K_U)$.

Thus we can define the quadratic form
$$\mathfrak{t}_U(\psi)= \|\nabla  \psi_{\mathrm{D}}\|^2_{L^2(\Omega)} + \langle \bm{\gamma}\psi | K_U \bm{\gamma} \psi\rangle_{H^{-1/2}(\partial\Omega)},$$
on the domain
$$D_U=  D(\mathfrak{t}_{\mathrm{D}}) + N(T^*)  \cap \bm{\gamma}^{-1} \left(D(K_U)\right). $$
For all $\psi \in D(T_U)$ it coincides with the expectation value of the self-adjoint extension~$T_U$:
$$\mathfrak{t}_U(\psi) = \langle \psi|T_U\psi\rangle_{L^2(\Omega)}. $$
The domain $D_U$ is a core of the quadratic form $\mathfrak{t}_U$ since it contains the domain of its associated self-adjoint operator $T_U$, namely 
$D(T_U) \subset D_U$.
\end{proof}

\begin{remark}
At variance with the domains of their corresponding operators, the domains of the kinetic energy forms are all contained between a minimal domain and a maximal one:
$$D(\mathfrak{t}_{\mathbb{I}}) 
\subset D(\mathfrak{t}_U) \subset  D(\mathfrak{t}_{-\mathbb{I}}).$$

The Dirichlet form $\mathfrak{t}_{\mathbb{I}}=\mathfrak{t}_{\mathrm{D}}$ has the expression 
$$\mathfrak{t}_{\mathrm{D}}(\psi) = \|\nabla \psi\|^2_{L^2(\Omega)},$$ 
on the minimal domain $D(\mathfrak{t}_{\mathrm{D}}) = H^1_0(\Omega)$, while the form $\mathfrak{t}_{-\mathbb{I}}$ 
has maximal domain  $D(\mathfrak{t}_{-\mathbb{I}}) = H^1_0(\Omega) + N(T^*)$ and acts as
$$\mathfrak{t}_{-\mathbb{I}}(\psi) = \|\nabla \psi_{\mathrm{D}}\|^2_{L^2(\Omega)}.$$
Both forms have no boundary term, since the boundary Hamiltonians are both zero, $K_{\mathbb{I}} = K_{-\mathbb{I}}=0$, but  on the smallest and largest domain, respectively: $D(K_{\mathbb{I}}) = \{0\}$ and $D(K_{-\mathbb{I}}) = H^{-1/2}(\partial\Omega)$. The maximal form $\mathfrak{t}_{-\mathbb{I}}$ corresponds to the Kre\u{i}n-von Neumann extension $T_{-\mathbb{I}}$, whose boundary condition is  the vanishing of the regularized normal derivative, $\bm{\mu}\psi=0$~\cite{Krein}.
\end{remark}

\begin{remark}
Notice that the boundary Hamiltonian  $K_{U}$ is nothing but the inverse partial Cayley transform of the unitary $U$ on its spectral subspace $\mathrm{Ran\,}Q_U= \overline{\mathrm{Ran\,}(\mathbb{I}-U)}$. (In the above proof  $Q_U$ has been identified as the spectral projection of $U$ on the Borel set $\mathbb{R}\setminus\{1\}$). Explicitly, one gets
$$K_U = \mathscr{C}^{-1} (U\!\!\upharpoonright_{\mathrm{Ran\,}Q_U}).$$
The inverse Cayley transform is well defined since the restriction of $U$ has the eigenvalue 1 stripped out. Notice, however, that if $1$ is not an isolated eigenvalue of $U$, then $1$ is not in the resolvent set of $U\!\!\upharpoonright_{\mathrm{Ran\,}Q_U}$, and thus $K_U$ is an unbounded operator.
\end{remark}

\section{Proof of Theorem~\ref{thm:GG}}
\label{sec:proofs1}

We will first need some properties of the regularized normal derivative $\mathbb{\mu}$ and of the projection $\Pi_{\mathrm{D}}$.
\begin{lemma}\label{lemma:regular}
The following properties hold:
\begin{enumerate}
\item 
Let $\bm{\mu}$ be the regularized normal derivative of Definition~\ref{def:mu}, then 
\begin{equation*}
\bm{\mu}: D(T^*) \to  H^{-1/2}(\partial\Omega)
\end{equation*}
is a surjective continuous map with respect to the graph norm.

\item \label{thm:directsum}
The domain of the adjoint $D(T^*)$ is the vector space direct sum of the domain of the Dirichlet Laplacian $T_{\mathrm{D}}$ and the kernel of $T^*$:
$$D(T^*) = D(T_{\mathrm{D}}) + N(T^*), \qquad \psi = \psi_{\mathrm{D}} + \psi_0,$$ 
where $\psi \in D(T^*)$, $\psi_{\mathrm{D}}= \Pi_{\mathrm{D}}\psi \in D(T_{\mathrm{D}})$, and  $\psi_{0}= (\mathbb{I}-\Pi_{\mathrm{D}})\psi \in N(T^*)$.
\item \label{thm:trasesurjective}
The map 
\begin{equation*}
\phi \in D(T^\ast) \mapsto (\bm{\gamma}\, \phi,  \bm{\mu}\, \phi)  \in H^{-1/2}(\partial\Omega) \times H^{-1/2}(\partial\Omega)
\end{equation*}
is surjective.
\end{enumerate}
\end{lemma}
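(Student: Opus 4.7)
The plan is to handle the three parts in order, using the strict positivity of $T_{\mathrm{D}}$, elliptic regularity, and the standard Lions--Magenes trace theory as the essential inputs.

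For assertion~(1), I would decompose $\bm{\mu}$ as the composition $\Lambda\circ\bm{\nu}\circ T_{\mathrm{D}}^{-1}\circ T^{*}$. Continuity with respect to the graph norm then falls out from continuity of each factor: $T^{*}:D(T^{*})\to L^{2}(\Omega)$ is continuous by definition of the graph norm; $T_{\mathrm{D}}^{-1}:L^{2}(\Omega)\to D(T_{\mathrm{D}})\subset H^{2}(\Omega)$ is bounded because $T_{\mathrm{D}}>0$ has bounded inverse and the graph norm on $D(T_{\mathrm{D}})$ is equivalent to the $H^{2}$-norm by elliptic regularity; $\bm{\nu}:H^{2}(\Omega)\to H^{1/2}(\partial\Omega)$ is continuous by the trace theorem; and $\Lambda:H^{1/2}(\partial\Omega)\to H^{-1/2}(\partial\Omega)$ is an isomorphism. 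For surjectivity, given $g\in H^{-1/2}(\partial\Omega)$, set $h=\Lambda^{-1}g\in H^{1/2}(\partial\Omega)$ and invoke surjectivity of the joint trace $(\bm{\gamma},\bm{\nu}):H^{2}(\Omega)\to H^{3/2}(\partial\Omega)\times H^{1/2}(\partial\Omega)$ to produce $\psi\in H^{2}(\Omega)$ with $\bm{\gamma}\psi=0$ and $\bm{\nu}\psi=h$. Such a $\psi$ lies in $D(T_{\mathrm{D}})$, so $\Pi_{\mathrm{D}}\psi=\psi$ and $\bm{\mu}\psi=\Lambda h=g$.

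For assertion~(\ref{thm:directsum}), strict positivity of $T_{\mathrm{D}}$ guarantees that $T_{\mathrm{D}}^{-1}$ is a bounded operator on $L^{2}(\Omega)$, so for every $\psi\in D(T^{*})$ the element $\psi_{\mathrm{D}}:=\Pi_{\mathrm{D}}\psi=T_{\mathrm{D}}^{-1}T^{*}\psi$ lies in $D(T_{\mathrm{D}})$. Since $T_{\mathrm{D}}\subset T^{*}$, we have $T^{*}\psi_{\mathrm{D}}=T_{\mathrm{D}}\psi_{\mathrm{D}}=T^{*}\psi$, so $\psi_{0}:=\psi-\psi_{\mathrm{D}}\in N(T^{*})$. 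Directness of the sum is immediate from injectivity of $T_{\mathrm{D}}$: if $\psi\in D(T_{\mathrm{D}})\cap N(T^{*})$, then $T_{\mathrm{D}}\psi=T^{*}\psi=0$ forces $\psi=0$.

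For assertion~(\ref{thm:trasesurjective}), given a target $(g_{1},g_{2})\in H^{-1/2}(\partial\Omega)\times H^{-1/2}(\partial\Omega)$, I would use assertion~(1) to pick $\psi_{\mathrm{D}}\in D(T_{\mathrm{D}})$ with $\bm{\gamma}\psi_{\mathrm{D}}=0$ and $\bm{\mu}\psi_{\mathrm{D}}=g_{2}$, then separately produce $\psi_{0}\in N(T^{*})$ with $\bm{\gamma}\psi_{0}=g_{1}$. Setting $\phi:=\psi_{\mathrm{D}}+\psi_{0}$ yields the desired preimage, since $\Pi_{\mathrm{D}}\psi_{0}=0$ forces $\bm{\mu}\psi_{0}=0$ and $\bm{\gamma}\psi_{\mathrm{D}}=0$ by construction. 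The only nontrivial point is the existence of $\psi_{0}$, which amounts to the classical ultraweak Dirichlet problem: for every $g_{1}\in H^{-1/2}(\partial\Omega)$ there is a (unique) $\psi_{0}\in L^{2}(\Omega)$ with $\Delta\psi_{0}=0$ in $\mathcal{D}'(\Omega)$ and $\bm{\gamma}\psi_{0}=g_{1}$. This is the surjectivity of $\bm{\gamma}\!\!\upharpoonright_{N(T^{*})}$ onto $H^{-1/2}(\partial\Omega)$, established via the Lions--Magenes Poisson operator, and is the main technical input I expect to be the chief obstacle of the proof.
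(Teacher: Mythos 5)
Your proposal is correct and follows essentially the same route as the paper: continuity via the factorization $\bm{\mu}=\Lambda\circ\bm{\nu}\circ\Pi_{\mathrm{D}}$, surjectivity via the Lions--Magenes trace theorem restricted to $D(T_{\mathrm{D}})$, the decomposition $\psi=\Pi_{\mathrm{D}}\psi+(\mathbb{I}-\Pi_{\mathrm{D}})\psi$ with $T^*T_{\mathrm{D}}^{-1}T^*\psi=T^*\psi$, and the reduction of assertion~(3) to the separate surjectivity of $\bm{\gamma}$ on $N(T^*)$ (the $L^2$ Dirichlet problem with $H^{-1/2}$ data) and of $\bm{\nu}$ on $D(T_{\mathrm{D}})$. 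Your explicit verification that the sum in assertion~(2) is direct (via injectivity of $T_{\mathrm{D}}$) is a small but welcome addition that the paper's proof leaves implicit.
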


\begin{proof}

\begin{enumerate}
\item
The map $\bm{\mu}$ is continuous as a composition of three continuous maps: $\bm{\mu}=\Lambda \bm{\nu}\Pi_{\mathrm{D}}$, with
$\Lambda : H^{1/2}(\partial\Omega)\to H^{-1/2}(\partial\Omega)$ being unitary,
$$\bm{\nu}:H^2(\Omega) \to H^{1/2}(\partial\Omega)$$
being continuous by the trace theorem, and
$$\Pi_{\mathrm{D}}= T_{\mathrm{D}}^{-1} T^* : D(T^*)\to D(T_{\mathrm{D}})= H^2(\Omega)\cap H^1_0(\Omega)$$ 
being a projection, as pointed out in Remark~\ref{rem:projdir}.

Surjectivity follows from  the surjectivity of the  projection 
$\Pi_{\mathrm{D}}$ 
and from the surjectivity of the map
$$\bm{\gamma}_1=(\bm{\gamma},\bm{\nu}):H^2(\Omega) \to H^{3/2}(\partial\Omega)\times H^{1/2}(\partial\Omega),$$
which implies the surjectivity of its restriction
$$\bm{\gamma}_1:H^2(\Omega) \cap \bm{\gamma}_1^{-1}(\{0\} \times H^{1/2}(\partial\Omega)) \to \{0\} \times H^{1/2}(\partial\Omega),$$
and thus of the map
$$\bm{\nu}:H^2(\Omega) \cap H^1_0(\Omega) \to H^{1/2}(\partial\Omega).$$

\item
For any $\psi \in D(T^*)$ we have $\psi_{\mathrm{D}}= \Pi_{\mathrm{D}}\psi \in D(T_{\mathrm{D}})$ and $\psi_{0}= (\mathbb{I}-\Pi_{\mathrm{D}})\psi \in N(T^*)$. Indeed, 
$$T^* \psi_0= T^*\psi - T^*\Pi_{\mathrm{D}}\psi= T^*\psi- T^* T_{\mathrm{D}}^{-1} T^* \psi = 
T^*\psi- T_{\mathrm{D}} T_{\mathrm{D}}^{-1} T^* \psi = 0.$$

\item 
Since $\Lambda:H^{1/2}(\partial\Omega)\to H^{-1/2}(\partial\Omega)$ is unitary,  the surjectivity of the map 
$$(\bm{\gamma},  \bm{\mu}):D(T^*)  \to H^{-1/2}(\partial\Omega) \times H^{-1/2}(\partial\Omega)$$
is equivalent to the surjectivity of 
$$(\bm{\gamma},  \bm{\nu} \Pi_{\mathrm{D}}):D(T^*)  \to H^{-1/2}(\partial\Omega) \times H^{1/2}(\partial\Omega).$$
By the decomposition of point~\ref{thm:directsum} of the Lemma, we get that for any $\psi \in D(T^*)$, $\psi= \psi_{\mathrm{D}}+\psi_{0}$ with $\bm{\gamma} \psi_{\mathrm{D}}=0$ and $\bm{\nu} \Pi_{\mathrm{D}} \psi_0=\bm{\nu} \Pi_{\mathrm{D}}(\mathbb{I}-\Pi_{\mathrm{D}})\psi =0$. Therefore,
$$(\bm{\gamma},  \bm{\nu}\Pi_{\mathrm{D}})\psi = (\bm{\gamma}\psi_0,\bm{\nu}\psi_{\mathrm{D}}).$$
Therefore, the surjectivity of $(\bm{\gamma},  \bm{\mu})$ follows from the separate surjectivity of the two component maps:
$$\bm{\gamma}: N(T^*)\to  H^{-1/2}(\partial\Omega), \qquad \bm{\nu}: D(T_{\mathrm{D}})\to  H^{1/2}(\partial\Omega).$$
The surjectivity of $\bm{\nu}$ has just been proved in part 1. The surjectivity of $\bm{\gamma}$ is nothing but a classical result~\cite{treves} on the existence of an $L^2(\Omega)$-solution to the  Laplace equation $-\Delta u=0$ for any Dirichlet boundary condition $\bm{\gamma} u =g \in H^{-1/2}(\partial\Omega)$. 
\end{enumerate}
\end{proof} 
Using the regularity result in Lemma~\ref{lemma:regular} we can define the generalized Gauss-Green boundary form.
\begin{definition}\label{defn:gGauss}
We define the generalized Gauss-Green boundary form 
\begin{equation*}
\Gamma:D(T^*)\times D(T^*)\to\mathbb{C}
\end{equation*}
such that for all $\phi, \psi \in D(T^\ast)$
\begin{equation*}
\Gamma(\phi,\psi)=\langle  \bm{\mu} \phi|\bm{\gamma} \psi\rangle_{H^{-1/2}(\partial\Omega)}-\langle\bm{\gamma}\phi|  \bm{\mu} \psi\rangle_{H^{-1/2}(\partial\Omega)}. 
\end{equation*}
\end{definition}
In~\cite{gg2} it was proved the following result.
\begin{proposition}\label{prop:boundaryform}
Let $T$ the operator defined in~(\ref{eq:lapl}) and let $\Gamma$ the generalized Gauss-Green boundary form in Definition~\ref{defn:gGauss}. Then
\begin{equation}
\label{eq:bound}
\Gamma(\phi,\psi)=\langle \phi|T^{*}\psi\rangle_{L^2(\Omega)} - \langle T^{*}\phi|\psi\rangle_{L^2(\Omega)} \quad \textrm{for all $\phi,\psi \in D(T^*)$}. 
\end{equation}
\end{proposition}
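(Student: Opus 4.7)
The plan is to reduce the generalized Gauss--Green form, which lives on the non-regular domain $D(T^*)$, to classical boundary pairings on the regular subspace $D(T_{\mathrm{D}})\subset H^2(\Omega)$, where ordinary integration by parts applies. The key structural tool is the vector-space decomposition $D(T^*)=D(T_{\mathrm{D}})+N(T^*)$ provided by assertion~\ref{thm:directsum} of Lemma~\ref{lemma:regular}: each $\phi\in D(T^*)$ splits uniquely as $\phi=\phi_{\mathrm{D}}+\phi_0$ with $\phi_{\mathrm{D}}=\Pi_{\mathrm{D}}\phi\in D(T_{\mathrm{D}})$ and $\phi_0\in N(T^*)$, and the same for $\psi$. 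Since $T^*$ annihilates the harmonic piece, $T^*\phi=T_{\mathrm{D}}\phi_{\mathrm{D}}$ and $T^*\psi=T_{\mathrm{D}}\psi_{\mathrm{D}}$, and the self-adjointness of $T_{\mathrm{D}}$ cancels the diagonal contribution $\langle\phi_{\mathrm{D}}|T_{\mathrm{D}}\psi_{\mathrm{D}}\rangle-\langle T_{\mathrm{D}}\phi_{\mathrm{D}}|\psi_{\mathrm{D}}\rangle$. One is left with only the two ``cross terms''
$$\langle\phi|T^*\psi\rangle_{L^2(\Omega)}-\langle T^*\phi|\psi\rangle_{L^2(\Omega)}=\langle\phi_0|T_{\mathrm{D}}\psi_{\mathrm{D}}\rangle_{L^2(\Omega)}-\langle T_{\mathrm{D}}\phi_{\mathrm{D}}|\psi_0\rangle_{L^2(\Omega)}.$$

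Next I would turn each cross term into a boundary duality pairing. The crucial identity is a generalized Green formula for $L^2$-harmonic functions: for $u\in N(T^*)$ and $v\in D(T_{\mathrm{D}})\subset H^2(\Omega)\cap H^1_0(\Omega)$ one has $\langle u|T_{\mathrm{D}}v\rangle_{L^2(\Omega)}=-\langle\bm{\gamma}u,\bm{\nu}v\rangle_{-1/2,1/2}$. For smooth $u,v$ this is the classical Gauss--Green identity (the two surface integrals in Green's second identity collapse to one because $\bm{\gamma}v=0$ and $\Delta u=0$); for general $u\in N(T^*)$ the formula extends by approximating $u$ in the graph norm by elements of $C^\infty(\overline\Omega)\cap N(T^*)$ and using that $\bm{\gamma}:L^2(\Omega)\to H^{-1/2}(\partial\Omega)$ is continuous together with $\bm{\nu}v\in H^{1/2}(\partial\Omega)$. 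Applying this identity to both cross terms (and noting $\bm{\gamma}\phi=\bm{\gamma}\phi_0$, $\bm{\gamma}\psi=\bm{\gamma}\psi_0$) yields
$$\langle\phi|T^*\psi\rangle-\langle T^*\phi|\psi\rangle=\langle\bm{\nu}\phi_{\mathrm{D}},\bm{\gamma}\psi\rangle_{1/2,-1/2}-\langle\bm{\gamma}\phi,\bm{\nu}\psi_{\mathrm{D}}\rangle_{-1/2,1/2}.$$

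Finally I would match the right-hand side with $\Gamma(\phi,\psi)$. By Definition~\ref{def:mu}, $\bm{\mu}\phi=\Lambda\bm{\nu}\phi_{\mathrm{D}}$, and the Hilbert inner product on $H^{-1/2}(\partial\Omega)$ is $\langle a|b\rangle_{H^{-1/2}}=\langle\Lambda_{-1/2}a|\Lambda_{-1/2}b\rangle_{L^2}$. Using $\Lambda_{-1/2}\Lambda=\Lambda_{1/2}$ and the definition of the pairing $\langle\cdot,\cdot\rangle_{s,-s}$, a short computation gives $\langle\bm{\mu}\phi|\bm{\gamma}\psi\rangle_{H^{-1/2}}=\langle\bm{\nu}\phi_{\mathrm{D}},\bm{\gamma}\psi\rangle_{1/2,-1/2}$ and $\langle\bm{\gamma}\phi|\bm{\mu}\psi\rangle_{H^{-1/2}}=\langle\bm{\gamma}\phi,\bm{\nu}\psi_{\mathrm{D}}\rangle_{-1/2,1/2}$, so the right-hand side is precisely $\Gamma(\phi,\psi)$. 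The principal technical obstacle is the $L^2$-Green identity used in the second step; once one accepts this classical trace-theoretic fact (which is really what makes $\bm{\gamma}$ meaningful on all of $D(T^*)$), the rest of the argument is algebraic bookkeeping that exploits the orthogonal-like splitting of $D(T^*)$ and the unitarity of~$\Lambda$.
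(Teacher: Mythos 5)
Your proof is correct and follows essentially the same route as the paper's: decompose $\phi=\phi_{\mathrm{D}}+\phi_0$ via $\Pi_{\mathrm{D}}$ (Lemma~\ref{lemma:regular}), apply the classical Gauss--Green identity on the regular component, rewrite the resulting boundary pairings as $H^{-1/2}(\partial\Omega)$ inner products using the unitarity of $\Lambda$, and conclude by density in the graph norm. Your preliminary cancellation of the diagonal term via the self-adjointness of $T_{\mathrm{D}}$ is only a cosmetic reorganization of the same computation.
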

\begin{proof}
According to Lemma~\ref{lemma:regular}.\ref{thm:directsum}, every $\phi\in C^{\infty}(\overline{\Omega})\subset D(T^*)$ has a unique decomposition $\phi = \phi_{\mathrm{D}} + \phi_0$, with $\bm{\gamma}\phi_{\mathrm{D}} = 0$ and $\Delta \phi_0=0$. Thus, for any $\phi,\psi\in C^{\infty}(\overline{\Omega})$, we get
\begin{eqnarray*}
& &\langle \phi|T^{*}\psi\rangle_{L^2(\Omega)} - \langle T^{*}\phi|\psi\rangle_{L^2(\Omega)}  = \int_\Omega \left(\Delta \bar{\phi}_{\mathrm{D}} \psi - \bar{\phi} \Delta \psi_{\mathrm{D}} \right)\mathrm{d} x \\
& & = 
\int_{\partial\Omega} \left(\frac{\partial \bar{\phi}_{\mathrm{D}}}{\partial \nu} \psi - \bar{\phi}  \frac{\partial \psi_{\mathrm{D}}}{\partial \nu} \right)\mathrm{d} S
= \langle \bm{ \nu} \phi_{\mathrm{D}},\bm{\gamma} \psi\rangle_{\frac{1}{2},-\frac{1}{2}}-\langle \bm{\gamma}\phi,\bm{\nu} \psi_{\mathrm{D}}\rangle_{-\frac{1}{2},\frac{1}{2}}\\
& & = \langle  \bm{\mu} \phi|\bm{\gamma} \psi\rangle_{H^{-1/2}(\partial\Omega)}-\langle\bm{\gamma}\phi|  \bm{\mu} \psi\rangle_{H^{-1/2}(\partial\Omega)},
\end{eqnarray*}
by the Gauss-Green formula and Definition~\ref{def:mu}. The result follows by density.
\end{proof}

We denote by $\mathscr{H}_b:= H^{-1/2}(\partial \Omega) \oplus H^{-1/2}(\partial \Omega)$.

\begin{definition}\label{defn:btilde}
Let  $\mathcal{W}$ be a subspace of $\mathscr{H}_{\textrm{b}}$. We define the  \emph{$\Gamma$-orthogonal} subspace of $\mathcal{W}$ as
\begin{equation*}
\mathcal{W}^{\dagger}:=\left\{ (u_1,u_2) \in \mathscr{H}_{\textrm{b}} \,\big{|}\, \langle u_2 | v_1 \rangle_{H^{-1/2}(\partial\Omega)} = \langle u_1 | v_2 \rangle_{H^{-1/2}(\partial\Omega)}, \, \forall (v_1,v_2) \in \mathcal{W} \right\}.
\end{equation*}
We say that $\mathcal{W}$ is a \emph{maximally isotropic} subspace if $\mathcal{W}=\mathcal{W}^{\dagger}$.
\end{definition}
\begin{proposition}\label{prop:carasa}
Let $\mathcal{W}$ be a subspace of $\mathscr{H}_{\textrm{b}}$ and let $\tilde{T}$ be the restriction of $T^\ast$ to the domain
$$
D(\tilde{T})=\left\{ \phi \in D(T^\ast) \,|\, (\bm{\gamma} \phi, \bm{\mu} \phi) \in \mathcal{W}  \right\}.
$$
Then $\tilde{T}$ is self-adjoint if and only if $\mathcal{W}$ is a closed maximally isotropic subspace.
\end{proposition}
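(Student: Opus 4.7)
My plan is to treat Proposition~\ref{prop:carasa} as a boundary-triple-style characterization: the self-adjointness of a restriction $\tilde{T}$ of $T^*$ should translate, via the generalized Green formula, into a symplectic-type condition on the image $\mathcal{W}$ of $D(\tilde{T})$ under the boundary map $(\bm{\gamma}, \bm{\mu})$. The two indispensable ingredients are Proposition~\ref{prop:boundaryform}, which identifies the failure of self-adjointness of $T^*$ with the boundary form $\Gamma$, and the surjectivity of $(\bm{\gamma}, \bm{\mu}): D(T^*) \to \mathscr{H}_b$ from Lemma~\ref{lemma:regular}.\ref{thm:trasesurjective}, which allows arbitrary pairs of boundary data to be realized as boundary values of actual functions in $D(T^*)$. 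I will use throughout that any linear subspace $\mathcal{W}$ contains $(0,0)$, whence $D(T) \subset D(\tilde{T})$ and $T \subset \tilde{T} \subset T^*$; taking adjoints yields $\tilde{T}^* \subset T^*$, and by density of $D(T)$ in $L^2(\Omega)$ one has $\tilde{T}^*\phi = T^*\phi$ for every $\phi \in D(\tilde{T}^*)$.

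For the $\Leftarrow$ direction, assume $\mathcal{W} = \mathcal{W}^\dagger$. Symmetry of $\tilde{T}$ is immediate: for $\phi, \psi \in D(\tilde{T})$ both pairs $(\bm{\gamma}\phi, \bm{\mu}\phi)$ and $(\bm{\gamma}\psi, \bm{\mu}\psi)$ lie in $\mathcal{W} = \mathcal{W}^\dagger$, so by definition $\Gamma(\phi, \psi) = 0$ and Green gives $\langle \tilde{T}\phi | \psi\rangle = \langle \phi | \tilde{T}\psi\rangle$. For $\tilde{T}^* \subset \tilde{T}$, pick $\phi \in D(\tilde{T}^*)$; the preliminary observations give $\phi \in D(T^*)$ and $\tilde{T}^*\phi = T^*\phi$, so Green produces $\Gamma(\phi, \psi) = 0$ for every $\psi \in D(\tilde{T})$, whence $(\bm{\gamma}\phi, \bm{\mu}\phi) \in \mathcal{W}^\dagger = \mathcal{W}$, i.e.\ $\phi \in D(\tilde{T})$. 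For the $\Rightarrow$ direction, assume $\tilde{T} = \tilde{T}^*$. The inclusion $\mathcal{W} \subset \mathcal{W}^\dagger$ is exactly symmetry of $\tilde{T}$ read backward through Green. For the reverse inclusion I lift any $(u_1, u_2) \in \mathcal{W}^\dagger$ via Lemma~\ref{lemma:regular}.\ref{thm:trasesurjective} to some $\phi \in D(T^*)$ with $(\bm{\gamma}\phi, \bm{\mu}\phi) = (u_1, u_2)$; the isotropy condition tested against every $\psi \in D(\tilde{T})$ becomes $\Gamma(\phi, \psi) = 0$, so $\phi \in D(\tilde{T}^*) = D(\tilde{T})$ and $(u_1, u_2) \in \mathcal{W}$.

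Closedness of $\mathcal{W}$ is automatic: $\mathcal{W}^\dagger$ is closed as an intersection of zero sets of continuous sesquilinear pairings on $\mathscr{H}_b$, so once $\mathcal{W} = \mathcal{W}^\dagger$ is established the closedness is free. The main obstacle I anticipate is not any deep analysis but careful bookkeeping around the adjoint, namely justifying that $\tilde{T}^*$ is a restriction of $T^*$ acting by the same rule; this rests on $(0,0) \in \mathcal{W}$ together with the density of $D(T)$ in $L^2(\Omega)$, and once cleanly recorded it lets every statement about $\tilde{T}^*$ be rewritten in $\Gamma$-language. Without the surjectivity from Lemma~\ref{lemma:regular}.\ref{thm:trasesurjective} the implication $\mathcal{W}^\dagger \subset \mathcal{W}$ would fail, since arbitrary boundary data pairs could not be realized as actual traces; this surjectivity is the only genuinely analytic input, while the rest of the argument is algebraic manipulation with Green's identity.
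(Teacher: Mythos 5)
Your proof is correct and follows essentially the same route as the paper's: both reduce self-adjointness of $\tilde{T}$ to the identity $\mathcal{W}=\mathcal{W}^{\dagger}$ by combining the generalized Green formula (Proposition~\ref{prop:boundaryform}) with the surjectivity of $(\bm{\gamma},\bm{\mu})$ from Lemma~\ref{lemma:regular}.\ref{thm:trasesurjective}, the paper merely condensing your two-inclusion argument into an equality of the graphs $\mathcal{G}(\tilde{T})$ and $\mathcal{G}(\tilde{T}^{\ast})$. Your explicit bookkeeping of $\tilde{T}^{\ast}\subset T^{\ast}$ and of the automatic closedness of $\mathcal{W}^{\dagger}$ supplies details the paper leaves implicit.
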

\begin{proof}
First of all we observe that
$$
\tilde{T} \,\, \textrm{is self-adjoint} \iff  \mathcal{G}(\tilde{T}^\ast)=\mathcal{G}(\tilde{T})
$$
and that $D(\tilde{T}) \subset  D(\tilde{T}^\ast) \subset D(T^\ast)$.
The proof follows immediately by observing that the graph of $\tilde{T}$ reads
\begin{eqnarray*}
\mathcal{G}(\tilde{T})&=& \left\{ (\phi, T^\ast \phi) \,\big{|}\, \phi \in D(\tilde{T})\right\} \\
                         &=&  \left\{ (\phi, T^\ast \phi)  \,\big{|}\, \phi \in D(T^\ast),\,\,\,(\bm{\gamma} \phi,  \bm{\mu} \phi) \in \mathcal{W} \right\},
\end{eqnarray*}
while the graph of $\tilde{T}^\ast$ is 
\begin{eqnarray*}
\mathcal{G}(\tilde{T}^\ast)&=& \left\{ (\phi, T^\ast \phi) \,|\, \phi \in D(\tilde{T}^\ast)\right\} \\
                                  &=& \left\{ (\phi, T^\ast \phi) \,|\, \phi \in D(T^\ast),\,\,\Gamma(\phi,\psi)=0, \,\,\,  \forall \psi \in D(\tilde{T})\right\} \\
                                  &=& \left\{ (\phi, T^\ast \phi) \,|\, \phi \in D(T^\ast), \langle u_1|  \bm{\mu} \phi \rangle_{H^{-1/2}(\partial\Omega)}=\langle \bm{\gamma} \phi | u_2 \rangle_{H^{-1/2}(\partial\Omega)},    \forall (u_1,u_2) \in \mathcal{W} \right\}  \\
                                  &=&  \left\{ (\phi, T^\ast \phi)  \,|\, \phi \in D(T^\ast),\,\,(\bm{\gamma} \phi,  \bm{\mu} \phi) \in \mathcal{W}^{\dagger}\right\},
\end{eqnarray*}
and thus $\mathcal{G}(\tilde{T})=\mathcal{G}(\tilde{T}^\ast)$ iff $\mathcal{W}=\mathcal{W}^{\dagger}$.
\end{proof}

The closed maximally isotropic subspaces are characterized by the following theorem, whose straightforward proof can be found in~\cite{bgp}.
\begin{theorem}\label{thm:bru}
\label{thmform}
A closed subspace  $\mathcal{W}$ of $\mathscr{H}_{\textrm{b}}$ is a maximally isotropic subspace if and only if there exists $U \in \mathscr{U}(H^{-1/2}(\partial\Omega))$ such that
\begin{equation*}
\mathcal{W}=\left\{(u_1,u_2)\in \mathscr{H}_{\textrm{b}} \,|\, i(\mathbb{I}+U)u_1=(\mathbb{I}-U)u_2\right\}.
\end{equation*}
\end{theorem}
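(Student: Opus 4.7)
The plan is to diagonalize the underlying skew-Hermitian form by a Cayley-type change of variables. Set $w_\pm := u_2 \pm i u_1$ and $z_\pm := v_2 \pm i v_1$. Using anti-linearity of $\langle \cdot | \cdot \rangle_{H^{-1/2}(\partial\Omega)}$ in the first slot, a direct computation gives
$$\langle u_2 | v_1 \rangle_{H^{-1/2}(\partial\Omega)} - \langle u_1 | v_2 \rangle_{H^{-1/2}(\partial\Omega)} = -\tfrac{i}{2}\bigl[\langle w_+ | z_+ \rangle_{H^{-1/2}(\partial\Omega)} - \langle w_- | z_- \rangle_{H^{-1/2}(\partial\Omega)}\bigr].$$
The map $(u_1,u_2) \mapsto (w_+,w_-)$ is a linear topological isomorphism of $\mathscr{H}_{\textrm{b}}$ (a scalar multiple of a unitary), so closedness is preserved and the isotropy condition of Definition~\ref{defn:btilde} becomes $\langle w_+|z_+\rangle = \langle w_-|z_-\rangle$. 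In the new variables, the candidate subspace in the statement reads simply $w_- = U w_+$. Hence the theorem reduces to: a closed subspace of $H^{-1/2}(\partial\Omega)\oplus H^{-1/2}(\partial\Omega)$ is maximally isotropic for the indefinite form of signature $(+,-)$ if and only if it is the graph of some $U \in \mathscr{U}(H^{-1/2}(\partial\Omega))$.

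For the ``if'' direction, let $U$ be unitary and $\mathcal{W}_U := \{(w_+, Uw_+) : w_+ \in H^{-1/2}(\partial\Omega)\}$. Isotropy follows at once from $U^*U = \mathbb{I}$. For maximality, a pair $(a_+, a_-) \in \mathcal{W}_U^{\dagger}$ satisfies $\langle a_+|w_+\rangle = \langle a_-|Uw_+\rangle = \langle U^*a_- | w_+\rangle$ for every $w_+$, so $a_+ = U^*a_-$, i.e. $a_- = U a_+$ and $(a_+,a_-) \in \mathcal{W}_U$.

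For the ``only if'' direction, assume $\mathcal{W}$ is closed and maximally isotropic, and work in $(w_+,w_-)$ coordinates. Pairing any $(w_+,w_-) \in \mathcal{W}$ with itself gives $\|w_+\| = \|w_-\|$, so the two coordinate projections $\pi_\pm$ are injective on $\mathcal{W}$, and $U : \pi_+(\mathcal{W}) \to \pi_-(\mathcal{W})$, $U w_+ := w_-$, is a well-defined linear isometry. To promote $U$ to a unitary on the full space, I would establish surjectivity of both projections: for any $a \in \pi_+(\mathcal{W})^{\perp}$ the pair $(a,0)$ belongs to $\mathcal{W}^{\dagger}$, since its inner products with any $(z_+,z_-) \in \mathcal{W}$ vanish on both sides; by maximality $(a,0)\in \mathcal{W}$, and self-pairing forces $\|a\|^2 = 0$. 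Hence $\pi_+(\mathcal{W})$ is dense; combined with the norm identity $\|(w_+,Uw_+)\|^2 = 2\|w_+\|^2$ and the closedness of $\mathcal{W}$, it is also closed, so $\pi_+(\mathcal{W}) = H^{-1/2}(\partial\Omega)$. The symmetric argument with $(0,b)$ yields $\pi_-(\mathcal{W}) = H^{-1/2}(\partial\Omega)$. Thus $U \in \mathscr{U}(H^{-1/2}(\partial\Omega))$, and returning to the $(u_1,u_2)$ coordinates the graph condition $w_- = U w_+$ translates to $(\mathbb{I}-U)u_2 = i(\mathbb{I}+U)u_1$, as claimed.

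The main obstacle I expect is precisely this surjectivity step in the ``only if'' direction: maximality must be exploited on \emph{both} sides to conclude that the isometry $U$ is onto, since an isometry between closed subspaces of a Hilbert space can fail to be surjective (as with the unilateral shift), which would leave $U$ merely isometric rather than unitary. The double use of the test vectors $(a,0)$ and $(0,b)$ against $\mathcal{W} = \mathcal{W}^{\dagger}$ is what circumvents this issue cleanly.
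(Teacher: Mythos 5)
Your proof is correct; the paper itself does not prove Theorem~\ref{thm:bru} but defers to~\cite{bgp}, and your argument---rotating to the coordinates $w_{\pm}=u_2\pm i u_1$ so that maximal isotropy becomes the graph condition for a unitary, and then using maximality against both test pairs $(a,0)$ and $(0,b)$ to upgrade the isometry to a surjection---is precisely the standard argument given there. The computation of the transformed form and the density-plus-closedness step showing $\pi_{\pm}(\mathcal{W})=H^{-1/2}(\partial\Omega)$ are both sound.
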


We can now conclude. 
\begin{proof}[Proof of Theorem~\ref{thm:GG}]
The proof follows immediately from Proposition~\ref{prop:carasa} and Theorem~\ref{thm:bru}.
\end{proof}
\begin{remark}
The proof of Theorem~\ref{thm:GG} can be translated into the language of boundary triples~\cite{bgp},  by saying that $(\mathscr{H}_{\textrm{b}}, \bm{\gamma},  \bm{\mu})$ is a boundary triple for $T^\ast$. This follows by Proposition~\ref{prop:boundaryform} and by assertion~\ref{thm:trasesurjective} of Lemma~\ref{lemma:regular}.
\end{remark}

\section{Proof of Theorem~\ref{thm:relation1-2}}
\label{sec:proofs2}

\begin{proof}

Let $\mathcal{X} \sqsubset H^{-1/2}(\partial \Omega)$ and $L:D(L) \subset \mathcal{X}  \to \mathcal{X} ^\ast$ a self-adjoint operator. For all $\psi \in D(T^\ast)$ we denote by $(\tilde{\bm{\mu}} \psi) \big{|}_{\mathcal{X}}$ the element of $\mathcal{X}^\ast$ defined as follows:
$$
(\tilde{\bm{\mu}} \psi) \big{|}_{\mathcal{X}} u:=\langle \Lambda_{-1}\bm{\mu} \psi, u \rangle_{\frac{1}{2},-\frac{1}{2}}, \qquad \textrm{for all $u \in \mathcal{X}$},
$$
thus we have that
$$
D(T_{(\mathcal{X} ,L)})=\left\{ \psi \in D(T^*) \,|\, \bm{\gamma} \psi \in D(L),\, (\tilde{\bm{\mu}} \psi) \big{|}_{\mathcal{X}} = L \bm{\gamma} \psi \right\}.
$$
The operator $\Lambda L : D(\Lambda L)\subset \mathcal{X} \to \mathcal{X} $  is self-adjoint, where $D(\Lambda L)=D(L)$. We can define 
$$
V=\mathscr{C}(\Lambda L)=(\Lambda L-i \mathbb{I}_{\mathcal{X}})(\Lambda L+i \mathbb{I}_{\mathcal{X}})^{-1}
$$ 
and by Proposition~\ref{prop:Cayley} in Section~\ref{sec:supplemental} we have that $V \in \mathscr{U}(\mathcal{X} )$. Now  observe that, by assertion~\ref{eqn:Cayleyinv} of Proposition~\ref{prop:Cayley}, we can rewrite $D(T_{(\mathcal{X} ,L)})$ as follows
\begin{equation*}
D(T_{(\mathcal{X} ,L)})=\left\{ \psi \in D(T^*) \,|\, \bm{\gamma} \psi \in D(\Lambda L),\, i (\mathbb{I}_{\mathcal{X}}+V)\bm{\gamma} \psi= (\mathbb{I}_{\mathcal{X}}-V)\Lambda (\tilde{\bm{\mu}} \psi) \big{|}_{\mathcal{X}} \right\}.
\end{equation*}
For all $\psi \in D(T^\ast)$ we denote by $( \bm{\mu} \psi) \big{|}_{\mathcal{X}}$ the element of $\mathcal{X}$ defined as follows:
$$
v( \bm{\mu} \psi) \big{|}_{\mathcal{X}}:=\langle v, \bm{\mu} \psi \rangle_{\frac{1}{2},-\frac{1}{2}}, \qquad \textrm{for all $v \in \mathcal{X}^\ast$},
$$
Observe that 
$$
\Lambda (\tilde{\bm{\mu}} \psi) \big{|}_{\mathcal{X}} =(\bm{\mu} \psi) \big{|}_{\mathcal{X}} \quad \textrm{for all $\psi \in D(T^\ast)$},
$$
therefore  $D(T_{(\mathcal{X} ,L)})$ can be rewritten as
\begin{equation*}
D(T_{(\mathcal{X} ,L)})=\left\{\psi \in D(T^*)\,|\,  \bm{\gamma} \psi \in D(\Lambda L),\, i (\mathbb{I}_{\mathcal{X}}+V)\bm{\gamma} \psi= (\mathbb{I}_{\mathcal{X}}-V)( \bm{\mu} \psi) \big{|}_{\mathcal{X}} \right\}.
\end{equation*}
By Lemma~\ref{thm1} in Sec.~\ref{sec:supplemental}, one gets that the condition $\bm{\gamma} \psi \in D(\Lambda L)$ can be dispensed with. Indeed, as long as $\bm{\gamma}\psi \in \mathcal{X} $ satisfies the equation
$$
 i (\mathbb{I}_{\mathcal{X}}+V)\bm{\gamma} \psi =(\mathbb{I}_{\mathcal{X}}-V)( \bm{\mu} \psi) \big{|}_{\mathcal{X}},
$$
then $\bm{\gamma} \psi \in D(\Lambda L)$. 
Therefore we have proved that
\begin{equation*}
D(T_{(\mathcal{X} ,L)})=\left\{\psi \in D(T^*) \,|\, \bm{\gamma} \psi \in \mathcal{X} ,\, i (\mathbb{I}_{\mathcal{X}}+V)\bm{\gamma} \psi= (\mathbb{I}_{\mathcal{X}}-V)( \bm{\mu} \psi) \big{|}_{\mathcal{X}} \right\}.
\end{equation*}
Thus, by defining the operator $U:=V \oplus \mathbb{I}_{{\mathcal{X}}^\perp} \in \mathscr{U}(H^{-1/2}(\partial \Omega))$, we have that $D(T_{(\mathcal{X} ,L)})=D(T_U)$, and that $T_{(\mathcal{X} ,L)}=T_U$ with $U:=\mathscr{C}(\Lambda L) \oplus \mathbb{I}_{{\mathcal{X}}^\perp} $.

Now we prove the converse. Fix $U \in \mathscr{U}(H^{-1/2}(\partial \Omega))$ and consider $T_U$, a self-adjoint extension of $T$. Let $P_U$ the spectral projection of $U$ on the Borel set $\{1\} \subset \mathbb{R}$. Define $\mathcal{X} :=\textrm{Ran}(P_U)^\perp \sqsubset H^{-1/2}(\partial \Omega)$ and  consider the operator $V= U\!\!\upharpoonright_{\mathcal{X}}  \in \mathscr{U}(\mathcal{X})$. Clearly, $1$~is not an eigenvalue of $V$, therefore we can define the self-adjoint operator
$$
L:= \Lambda^{-1}\left[ i(\mathbb{I}_{\mathcal{X}}+V)(\mathbb{I}_{\mathcal{X}}-V)^{-1}\right]:D(L) \subset \mathcal{X}  \to \mathcal{X} ^\ast
$$
We know that 
\begin{equation*}
D(T_U)=\left\{ \psi \in D(T^\ast) \,|\, i(\mathbb{I}+U)\bm{\gamma} \psi= (\mathbb{I}-U) \bm{\mu} \psi \right\}
\end{equation*}
By projecting on $\mathcal{X} $ and ${\mathcal{X}}^{\perp}$ the equation $ i(\mathbb{I}+U)\bm{\gamma} \psi= (\mathbb{I}-U) \bm{\mu} \psi$,  one gets
\begin{equation*}
D(T_U)=\left\{ \psi \in D(T^\ast) \,|\,  \,\bm{\gamma}\psi\in \mathcal{X},\,  i(\mathbb{I}+V)\bm{\gamma} \psi= (\mathbb{I}-V)( \bm{\mu} \psi) \big{|}_{\mathcal{X}}\right\}.
\end{equation*}
Since $( \bm{\mu} \psi) \big{|}_{\mathcal{X}}=\Lambda (\tilde{\bm{\mu}} \psi) \big{|}_{\mathcal{X}}$, for all $\psi \in D(T^\ast)$, we have that
\begin{equation*}
D(T_U)=\left\{ \psi \in D(T^\ast) \,|\, \bm{\gamma}\psi\in \mathcal{X},\,  i(\mathbb{I}+V)\bm{\gamma} \psi= (\mathbb{I}-V)\Lambda (\tilde{\bm{\mu}} \psi) \big{|}_{\mathcal{X}}\right\}.
\end{equation*}
Again by Lemma~\ref{thm1}, one has that 
\begin{equation*}
D(T_U)=\left\{ \psi \in D(T^\ast) \,|\, \bm{\gamma}\psi\in D(\Lambda L),\,  i(\mathbb{I}+V)\bm{\gamma} \psi= (\mathbb{I}-V)\Lambda (\tilde{\bm{\mu}} \psi) \big{|}_{\mathcal{X}}\right\}
\end{equation*}
and thus
\begin{equation*}
D(T_U)=D(T_{(\mathcal{X},L)}). \qedhere
\end{equation*}
\end{proof}

\section{Supplemental results}
\label{sec:supplemental}

Let us recall some basic facts about the Cayley transform of self-adjoint operators. For further details see \cite{rudin}.
\begin{definition}
\label{def:Cayley}
Let $A : D(A)\subset \mathscr{H} \rightarrow \mathscr{H}$ be a self-adjoint operator. We define the \emph{Cayley transform} of $A$, denoted by $\mathscr{C}(A)$, as follows
\begin{equation*}
\mathscr{C}(A)=(A- i \mathbb{I}) (A+ i \mathbb{I})^{-1},
\end{equation*}
where $\mathbb{I}$ is the identity operator on $\mathscr{H}$. 

Conversely, let $U \in \mathscr{U}(\mathscr{H})$ and assume that  $1$ is not an eigenvalue of $U$.  We define the \emph{inverse Cayley transform} of $U$, denoted by $\mathscr{C}^{-1}(U)$, as follows
\begin{equation*}
\mathscr{C}^{-1}(U)=i (\mathbb{I}+U)(\mathbb{I}-U)^{-1}.
\end{equation*}
\end{definition}
\begin{proposition}[\cite{rudin}]
\label{prop:Cayley}
Let 
$A : D(A)\subset \mathscr{H} \rightarrow \mathscr{H}$ be a self-adjoint operator. Then
\begin{enumerate}
\item $\mathscr{C}(A) \in \mathscr{U}(\mathscr{H})$;
\item $\mathbb{I}-\mathscr{C}(A)$ is injective;
\item $\mathrm{Ran}(\mathbb{I}-\mathscr{C}(A))=D(A)$;
\item For all $\phi \in D(A)$,
\begin{equation*}\label{eqn:Cayleyinv}
A\phi=\,i (\mathbb{I}+\mathscr{C}(A))(\mathbb{I}-\mathscr{C}(A))^{-1}\phi = \mathscr{C}^{-1}(\mathscr{C}(A))\phi . 
\end{equation*}
\item
Moreover if $U \in \mathscr{U}(\mathscr{H})$ such that $1$ is not an eigenvalue of $U$ then 
$$\mathscr{C}^{-1}(U): \mathrm{Ran}(\mathbb{I}-U)\to \mathscr{H}$$ 
is a self-adjoint operator and $\mathscr{C}(\mathscr{C}^{-1}(U))=U$.
\end{enumerate}
\end{proposition}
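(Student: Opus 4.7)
The plan is to prove the five assertions in turn, with the basic identity
$$\|(A \pm i\mathbb{I})\phi\|^2 = \|A\phi\|^2 + \|\phi\|^2, \qquad \phi \in D(A),$$
doing much of the work. This identity holds because $A$ is symmetric, so $\langle \phi | A\phi\rangle$ is real. From it one reads off at once that $A + i\mathbb{I}$ has a bounded inverse on its range; since $A$ is self-adjoint (not merely closed symmetric), the standard criterion gives $\mathrm{Ran}(A \pm i\mathbb{I}) = \mathscr{H}$, so $(A+i\mathbb{I})^{-1}:\mathscr{H}\to D(A)$ is everywhere defined. This makes $\mathscr{C}(A) = (A - i\mathbb{I})(A+i\mathbb{I})^{-1}$ defined on all of $\mathscr{H}$, and parametrizing $\psi = (A+i\mathbb{I})\phi$ one computes $\|\mathscr{C}(A)\psi\| = \|(A-i\mathbb{I})\phi\| = \|(A+i\mathbb{I})\phi\| = \|\psi\|$ together with $\mathrm{Ran}(\mathscr{C}(A)) = \mathrm{Ran}(A - i\mathbb{I}) = \mathscr{H}$. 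So $\mathscr{C}(A)$ is unitary, establishing (1).

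For (2) and (3), the key identity is $(\mathbb{I} - \mathscr{C}(A))(A+i\mathbb{I})\phi = 2i\phi$, obtained from $(A+i\mathbb{I})\phi - (A-i\mathbb{I})\phi = 2i\phi$. Injectivity of $\mathbb{I}-\mathscr{C}(A)$ then follows since if $(\mathbb{I}-\mathscr{C}(A))\psi = 0$ and $\psi = (A+i\mathbb{I})\phi$, then $\phi = 0$, hence $\psi = 0$. The same identity shows that every $\phi\in D(A)$ equals $(2i)^{-1}(\mathbb{I}-\mathscr{C}(A))\psi$ for $\psi = (A+i\mathbb{I})\phi$, so $D(A) \subset \mathrm{Ran}(\mathbb{I}-\mathscr{C}(A))$; the reverse inclusion holds by construction, since $(\mathbb{I}-\mathscr{C}(A)) = 2i(A+i\mathbb{I})^{-1}$ takes values in $D(A)$. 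Assertion (4) is then a direct computation: $(\mathbb{I}+\mathscr{C}(A))(A+i\mathbb{I})\phi = 2A\phi$, combined with $(\mathbb{I}-\mathscr{C}(A))^{-1}\phi = (2i)^{-1}(A+i\mathbb{I})\phi$, gives $i(\mathbb{I}+\mathscr{C}(A))(\mathbb{I}-\mathscr{C}(A))^{-1}\phi = A\phi$.

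The main obstacle is (5), the converse direction. Given $U \in \mathscr{U}(\mathscr{H})$ without eigenvalue $1$, the first step is to check that $\mathrm{Ran}(\mathbb{I}-U)$ is dense: $\mathrm{Ran}(\mathbb{I}-U)^\perp = \ker(\mathbb{I}-U^\ast)$, and a fixed vector of $U^\ast$ is also fixed by $U = (U^\ast)^{-1}$, hence is zero by hypothesis. Set $A := i(\mathbb{I}+U)(\mathbb{I}-U)^{-1}$ on $D(A) = \mathrm{Ran}(\mathbb{I}-U)$. Writing $\phi = (\mathbb{I}-U)f$, $\psi = (\mathbb{I}-U)g$ and expanding $\langle \phi | A\psi\rangle$ and $\langle A\phi|\psi\rangle$, all terms like $\langle Uf | Ug\rangle = \langle f|g\rangle$ cancel by unitarity, leaving the two equal; so $A$ is symmetric. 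Self-adjointness will follow from surjectivity of $A \pm i\mathbb{I}$, which I expect to fall out of the algebraic identities
$$(A+i\mathbb{I})\phi = 2i(\mathbb{I}-U)^{-1}\phi = 2if, \qquad (A-i\mathbb{I})\phi = 2iU(\mathbb{I}-U)^{-1}\phi = 2iUf,$$
both of which sweep out all of $\mathscr{H}$ as $f$ does. The same two relations immediately yield $\mathscr{C}(A)\psi = (A-i\mathbb{I})(A+i\mathbb{I})^{-1}\psi = U\psi$, completing the proof. The delicate point in this last step is keeping track of the domain of $(\mathbb{I}-U)^{-1}$ and verifying that the surjectivity argument really produces the right inverse relation; this is where one must use the density of $\mathrm{Ran}(\mathbb{I}-U)$ to be sure $A$ is densely defined in the first place.
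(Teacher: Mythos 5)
Your proof is correct and complete; it is the standard argument (the one the paper itself defers to Rudin for without reproducing): the isometry identity $\|(A\pm i\mathbb{I})\phi\|^2=\|A\phi\|^2+\|\phi\|^2$ plus surjectivity of $A\pm i\mathbb{I}$ for parts (1)--(4), and, for part (5), density of $\mathrm{Ran}(\mathbb{I}-U)$ via $\ker(\mathbb{I}-U^\ast)=\{0\}$, symmetry by unitarity of $U$, and self-adjointness from $\mathrm{Ran}(A\pm i\mathbb{I})=\mathscr{H}$ read off from $(A+i\mathbb{I})(\mathbb{I}-U)f=2if$ and $(A-i\mathbb{I})(\mathbb{I}-U)f=2iUf$. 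No gaps.
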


\begin{lemma}
\label{thm1}
Let $A: D(A) \subset \mathscr{H} \to \mathscr{H}$  be a self-adjoint operator and
$$
\mathcal{G}(A)=\{(u,Au)\in \mathscr{H}\times \mathscr{H}\,|\, u\in D(A)\}
$$
be its graph.
Let
\begin{equation*}
\Theta(A)=\{(\phi,\psi)\in \mathscr{H}\times \mathscr{H}\,|\, i (\mathbb{I}+\mathscr{C}(A))\,\phi=(\mathbb{I}-\mathscr{C}(A))\,\psi \}.
\end{equation*}
Then $\mathcal{G}(A)=\Theta(A)$.
\end{lemma}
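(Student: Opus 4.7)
The plan is to prove the two inclusions $\mathcal{G}(A)\subseteq \Theta(A)$ and $\Theta(A)\subseteq \mathcal{G}(A)$ separately, using the properties of the Cayley transform collected in Proposition~\ref{prop:Cayley}. Throughout the argument, set $U=\mathscr{C}(A)$ for brevity; recall that $U$ is unitary, $\mathbb{I}-U$ is injective, and $\mathrm{Ran}(\mathbb{I}-U)=D(A)$.

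For $\mathcal{G}(A)\subseteq \Theta(A)$, I would take $u\in D(A)$ and verify that $(u,Au)$ satisfies the defining relation of $\Theta(A)$. By item~4 of Proposition~\ref{prop:Cayley}, $Au = i(\mathbb{I}+U)(\mathbb{I}-U)^{-1}u$. Applying $(\mathbb{I}-U)$ to both sides and using the fact that $\mathbb{I}+U$ and $\mathbb{I}-U$ commute (they are polynomials in $U$), I would obtain
\begin{equation*}
(\mathbb{I}-U)\,Au \;=\; i(\mathbb{I}+U)(\mathbb{I}-U)(\mathbb{I}-U)^{-1}u \;=\; i(\mathbb{I}+U)\,u,
\end{equation*}
which is exactly the condition $(u,Au)\in \Theta(A)$.

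For the converse inclusion $\Theta(A)\subseteq \mathcal{G}(A)$, the key observation is that the defining equation $i(\mathbb{I}+U)\phi = (\mathbb{I}-U)\psi$ can be rearranged as
\begin{equation*}
U(\psi+i\phi) \;=\; \psi-i\phi.
\end{equation*}
Since $A$ is self-adjoint, $A+i\mathbb{I}:D(A)\to \mathscr{H}$ is a bijection, so there is a unique $u\in D(A)$ with $(A+i\mathbb{I})u = \psi + i\phi$. By the definition of the Cayley transform, $U(\psi+i\phi) = U(A+i\mathbb{I})u = (A-i\mathbb{I})u$, and comparing with the displayed equality gives $(A-i\mathbb{I})u = \psi-i\phi$. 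Adding and subtracting the two resulting identities yields $Au=\psi$ and $u=\phi$, whence $\phi\in D(A)$ and $(\phi,\psi)\in \mathcal{G}(A)$.

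Neither direction presents a serious obstacle; the one subtlety is the rearrangement of the defining relation into the form $U(\psi+i\phi)=\psi-i\phi$, which makes transparent the link with the range formula $U=(A-i\mathbb{I})(A+i\mathbb{I})^{-1}$. From there the proof reduces to invoking the bijectivity of $A\pm i\mathbb{I}$ on $D(A)$, which is standard for self-adjoint operators.
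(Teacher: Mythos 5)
Your proof is correct, and for the nontrivial inclusion $\Theta(A)\subseteq\mathcal{G}(A)$ it takes a genuinely different (and arguably cleaner) route than the paper. The paper substitutes the identities $\mathbb{I}-\mathscr{C}(A)=2i(A+i\mathbb{I})^{-1}$ and $\mathbb{I}+\mathscr{C}(A)=2A(A+i\mathbb{I})^{-1}$ into the defining relation to obtain $A(A+i\mathbb{I})^{-1}\phi=(A+i\mathbb{I})^{-1}\psi$, and then has to argue that $(A+i\mathbb{I})^{-1}\phi\in D(A^2)$ and that $A$ and $A+i\mathbb{I}$ commute on $D(A^2)$ before concluding $\psi=A\phi$. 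You instead rewrite the relation as $\mathscr{C}(A)(\psi+i\phi)=\psi-i\phi$, pull back $\psi+i\phi$ through the bijection $A+i\mathbb{I}:D(A)\to\mathscr{H}$ to a unique $u\in D(A)$, and read off $(A-i\mathbb{I})u=\psi-i\phi$ from the defining action of the Cayley transform; adding and subtracting then gives $u=\phi$ and $Au=\psi$ in one stroke. This bypasses the second-power domain $D(A^2)$ and the commutation argument entirely, at the cost of nothing: the bijectivity of $A\pm i\mathbb{I}$ is exactly what makes the Cayley transform well defined in the first place. Your treatment of the easy inclusion $\mathcal{G}(A)\subseteq\Theta(A)$ matches the paper's (both invoke item~4 of Proposition~\ref{prop:Cayley}), just spelled out in more detail.
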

\begin{proof}
Notice first that the inclusion $\mathcal{G}(A)\subset\Theta(A)$ follows immediately from property~\ref{eqn:Cayleyinv} of Proposition~\ref{prop:Cayley}.
 We need to show that $\Theta(A)\subset \mathcal{G}(A)$. 
 
 Fix $(\phi,\psi)\in \mathscr{H}\times\mathscr{H}$ such that
\begin{equation}
\label{eq:eq1}
 i (\mathbb{I}+\mathscr{C}(A))\,\phi=(\mathbb{I}-\mathscr{C}(A))\,\psi.
\end{equation}
Observe that
$$
\mathbb{I}-\mathscr{C}(A)=2i (A+i \mathbb{I})^{-1} \quad \textrm{and} \quad \mathbb{I}+\mathscr{C}(A)=2 A (A+i \mathbb{I})^{-1}.
$$
Plugging these expressions in equation~(\ref{eq:eq1}) we obtain:
\begin{equation*}
A\,(A+i\,\mathbb{I})^{-1}\phi = (A+i\,\mathbb{I})^{-1}\psi.
\end{equation*}
The right hand side belongs to $D(A)$, thus also the left hand side belongs to $D(A)$. Then we can multiply both sides by $A+i\,\mathbb{I}$, obtaining
\begin{equation*}
\psi=(A+i\,\mathbb{I})\,A\,(A+i\,\mathbb{I})^{-1}\phi.
\end{equation*} 
Since $(A+i\,\mathbb{I})^{-1}\phi \in D(A)$ and $A\,(A+i\,\mathbb{I})^{-1}\phi \in D(A)$, it follows that $(A+i\,\mathbb{I})^{-1}\phi \in D(A^2)$. The operators $(A+i\,\mathbb{I})$ and $A$ commute on $D(A^2)$ and we get that
\begin{equation*}
\psi=A \phi,
\end{equation*}
and thus $(\phi, \psi) \in \mathcal{G}(A)$.
\end{proof}

\section*{Acknowledgments}

We thank Manolo Asorey and Beppe Marmo for useful discussions. This work was  supported by Cohesion and Development Fund 2007-2013 - APQ Research Puglia Region ``Regional program supporting smart specialization and social and environmental sustainability - FutureInResearch'', by the Italian National Group of Mathematical Physics (GNFM-INdAM), and by Istituto Nazionale di Fisica Nucleare (INFN) through the project ``QUANTUM''.

\bibliography{Myrefs}{}

\begin{thebibliography}{20}

\bibitem{Arendt}
Arendt W., Mazzeo R.:
Spectral properties of the Dirichlet-to-Neumann operator on Lipschitz domains.
Ulmer Seminare, \textbf{12}, 23--37 (2007).

\bibitem{reviewbal}
Asorey M., Balachandran A.P., Marmo G., Costa e Silva I.P., de Queiroz A.R., Teotonio-Sobrinho P., Vaidya S.:
Quantum Physics and Fluctuating Topologies: Survey.
arXiv:1211.6882 [hep-th] (2012)

\bibitem{trotter}
Asorey M., Facchi P., Marmo G., Pascazio S.:
A dynamical composition law for boundary conditions.
J. Phys. A: Math. Theor. \textbf{46},  102001 1--8 (2013)

\bibitem{cas3}
Asorey M.,  Garc\'{i}a-Alvarez D., Mu\~noz-Casta\~neda J.M.:  
Boundary Effects in Bosonic and Fermionic Field Theories.
Int. J. Geom. Methods Mod. Phys. \textbf{12}, 1560004 1--12 (2015)

\bibitem{marmo} 
Asorey M., Ibort A., Marmo G.: 
Global theory of quantum boundary conditions and topology change. 
Int. J. Mod. Phys. A {\bf 20}, 1001--1025 (2005)

\bibitem{aim2}
Asorey M., Ibort A., Marmo G.:
The topology and geometry of self-adjoint and elliptic boundary conditions for Dirac and Laplace operators.
Int. J. Geom. Methods Mod. Phys. \textbf{12}, 1561007 1--77 (2015)

\bibitem{cas2}
Asorey M., Mu\~noz-Casta\~neda J.M.: 
Vacuum Boundary Effects.
Int. J. Theor. Phys.  \textbf{50}, 2211--2221 (2011)

\bibitem{cas}
Asorey M., Mu\~noz-Casta\~neda J.M.: 
Attractive and repulsive Casimir vacuum energy with general boundary conditions.
Nuclear Phys. B \textbf{874}, 852-876 (2013)

\bibitem{bir}
Birman M.S.: 
On the theory of self-adjoint extensions of positive definite operators.
Mat. Sbornik \textbf{38}, 431--450  (1956)

\bibitem{bgp} 
Br\"uning J., Geyler V.,  Pankrashkin K.: 
Spectra of self-adjoint extensions and applications to solvable Schrš\"odinger operators. 
Rev. Math. Phys. {\bf 20}, 1--70 (2008) 

\bibitem{cal}
Calkin J.W.:
Abstract Symmetric Boundary Conditions.
Trans. Am. Math. Soc. \textbf{45}, 369--342 (1939)


\bibitem{da} 
Dell'Antonio, G.: 
{Lectures on the Mathematics of Quantum Mechanics I}, pp. 446--451. 
Atlantis Press, Amsterdam (2015)

\bibitem{deoliv}
de Olivera C.R.: 
{Intermediate spectral theory and quantum dynamics}. Progress in Mathematical Physics Vol.\textbf{54}. 
Birkh\"auser, Berlin (2009)

\bibitem{deolivAB}
de Oliveira C.R.,  Pereira M.:
Scattering and self-adjoint extensions of the Aharonov-Bohm Hamiltonian.
J. Phys. A: Math. Theor. \textbf{43},  354011 1--29 (2010)

\bibitem{moving walls}
Di Martino S., Anzà F., Facchi P., Kossakowski A., Marmo G., Messina A., Militello B., Pascazio S.: 
A quantum particle in a box with moving walls.
J. Phys. A: Math. Theor. \textbf{46},  365301 1--10 (2013)

\bibitem{bangalectures}
Di Martino S., Facchi P.:
Quantum systems with time-dependent boundaries.
Int. J. Geom. Methods Mod. Phys. \textbf{12},  1560003 1--24 (2015) 

\bibitem{phasebox}
Facchi P., Garnero G., Marmo G., Samuel J.:
Moving walls and geometric phases.
Annals of Physics \textbf{372},  201--214 (2016)


\bibitem{gesz}
Gesztesy F., Mitrea M.:
A description of all self-adjoint extensions of the Laplacian and Kre\u{i}n-type resolvent formulas on non-smooth domains.
J. Anal. Math. \textbf{113},  53--172 (2011)

\bibitem{gg2} 
Grubb, G.: 
A characterization of the non-local boundary value problems associated with an elliptic operator. 
Ann. Scuola Norm. Sup. Pisa {\bf 22}, 425--513 (1968)

\bibitem{gg1} 
Grubb, G.: 
{Distributions and Operators}.
Springer, Berlin (2008)

\bibitem{horm} 
H\"ormander L.: 
{Linear Partial Differential Operators}. 
Springer-Verlag, Berlin (1963)

\bibitem{ibortlledo}
Ibort A.,  Lled\'{o} F., P\'{e}rez-Pardo J.M.: 
On Self-Adjoint Extensions and Symmetries in Quantum Mechanics.
Annales Henri Poincar\'e \textbf{16}, 2367--2397 (2015)

\bibitem{ibper} 
Ibort A.,  Lled\'{o} F., P\'{e}rez-Pardo J.M.: 
Self-adjoint extensions of the Laplace-Beltrami operator and unitaries at the boundary. 
J. Funct. Anal. {\bf 268}, 634--670 (2015) 

\bibitem{Krein}
Kre\u{i}n M.:
The theory of self-adjoint extensions of semi-bounded Hermitian transformations and its applications. I. 
Rec. Math. [Mat. Sbornik] N.S. \textbf{20},  431--495 (1947)

\bibitem{lm} 
Lions J. L.,  Magenes E.: 
{Non homogeneous boundary value problems}. 
Springer, Berlin (1972)

\bibitem{posvN}
Posilicano A.:
Self-Adjoint Extensions by Additive Perturbations.
Ann. Scuola Norm. Sup. Pisa Cl. Sci. \textbf{5} Vol. II, 1--20 (2003)

\bibitem{p} 
Posilicano, A.: 
Self-adjoint extensions of restrictions. 
Operators and  Matrices {\bf 2}, 109--147 (2008)

\bibitem{RS}
Reed M.,  Simon B.: 
Methods of Modern Mathematical Physics, vol. I.
Academic Press, New York (1975)

\bibitem{rudin} 
Rudin W.: {Functional Analysis}. 
McGraw Hill, New York (1991)

\bibitem{treves} 
Treves F.: 
{Basic Linear Partial Differential Equations}. 
Dover, New York (2006)

\bibitem{Vishik}
Vi\u{s}ik M.L.: 
On general boundary problems for elliptic differential equations.
Trudy Moskov. Mat. Obsc. \textbf{1}, 187--246 (1952). 
English translation in Amer. Math. Soc. Transl. Ser 2, \textbf{24}, 107--172 (1963)

\bibitem{vN}
von Neumann J.: 
Allgemeine Eigenwerttheorie Hermitescher Funktionaloperatoren.
 Math. Ann. \textbf{102}, 49--131 (1930)

\end{thebibliography}
\bibliographystyle{plain}

\end{document}